\newtheorem{theorem}{Theorem}
\newtheorem{corollary}{Corollary}
\def\b{\mathbb}
\def\c{\mathcal}
\def\sinr{\textrm{SINR}}
\def\b{\mathbb}
\def\c{\mathcal}
\def\d{{\rm d}}
\newtheorem{lem}{Lemma}
\begin{document}
\title{Power Control for D2D Underlaid Cellular Networks: Modeling, Algorithms and Analysis}
\author{Namyoon Lee, Xingqin Lin, Jeffrey G. Andrews, and Robert W. Heath Jr.  \bigskip
\\
\normalsize Wireless Networking and Communication Group \\ \normalsize Department of Electrical and Computer Engineering
\\ \normalsize The University of Texas at
Austin, Austin, TX 78712 USA\\
      { \normalsize E-mail~:~\{namyoon.lee, xlin, rheath\}@utexas.edu}, jandrews@ece.utexas.edu}

\maketitle
\begin{abstract}
%
%
%

This paper proposes a random network model for a D2D underlaid cellular system using \textit{stochastic geometry} and develops centralized and distributed power control algorithms. The goal of the centralized power control is two-fold:  ensure the cellular users have sufficient coverage probability by limiting the interference created by underlaid D2D users, while scheduling as many D2D links as possible. For the distributed power control method, the optimal on-off power control strategy is proposed, which maximizes the sum rate of D2D links. Analytical expressions are derived for the coverage probabilities of cellular, D2D links, and the sum rate of the D2D links in terms of the density of D2D links and the path-loss exponent. The analysis reveals the impact of key system parameters on the network performance. For example, the bottleneck of D2D underlaid cellular networks is the cross-tier interference between D2D links and the cellular user, not the D2D intra-tier interference when the density of D2D links is sparse. Simulation results verify the exactness of the derived coverage probabilities and the sum rate of D2D links. \end{abstract}

%


\vspace{-0.5cm}
\section{Introduction}

Device-to-device (D2D) communication underlaid with cellular networks allows direct communication between mobile users \cite{Doppler1:09, Fodor:12, 3gppD2D}. D2D is an attractive approach for dealing with local traffic in cellular networks. The initial motivation for incorporating D2D communication in cellular networks is to support proximity-based services, e.g. social networking applications or media sharing \cite{3gppD2D}. Assuming there are proximate communication opportunities, D2D communication may also increase  area spectral efficiency, improve cellular coverage, lower end-to-end latency, or reduce handset power consumption \cite{Richardson:10}, \cite{Fodor:12}. In spite of these potential gains, the coexistence of D2D and cellular communication in the same spectrum is challenging due to the difficulty of interference management \cite{Fodor:12}. Specifically, the underlaid D2D signal becomes a new source of interference. As a result, cellular links experience cross-tier interference from the D2D transmissions whereas the D2D links need to combat not only the inter-D2D interference but also the cross-tier interference from the cellular transmissions. Therefore, interference management is essential to ensure successful coexistence of cellular and D2D links.

Power control is an effective approach to mitigate interference in wireless networks; it is broadly used in current wireless systems. In this paper, we propose power control methods for interference coordination and analyze their performance in D2D underlaid cellular networks. In particular, we consider a hybrid random network model using stochastic geometry and develop two different power control algorithms for the proposed network model. With a carefully designed (centralized) power control technique, we show that multiple D2D links may communicate successfully while guaranteeing reliable communication for the existing cellular link. This shows that, with an appropriate power control technique, underlaid  D2D links help to increase the network sum-throughput  without causing unacceptable performance degradation to existing cellular links.

\vspace{-0.2cm}
\subsection{Related Work}\vspace{-0.01cm}
There has been considerable interest in power control techniques for D2D underlaid cellular networks. A simple power control scheme was proposed in \cite{Yu2:12} for a single-cell scenario and deterministic network model, which regulates D2D transmit power to protect the existing cellular links. To maximize the sum rate of the network, a D2D transmit power allocation method was proposed in \cite{Yu3:11} for the deterministic network model. A dynamic power control mechanism for a single D2D link communication was proposed in \cite{Gu:12}, which targets improving the cellular system performance by mitigating the interference generated by D2D communication. The main idea was to adjust the D2D transmit power via base station (BS) to protect cellular users. In \cite{Xiao:11}, a power minimization solution with joint subcarrier allocation, adaptive modulation, and mode selection was proposed to guarantee the quality-of-service demand of D2D and cellular users. In prior work \cite{Fitzek:06,Wu:01,Hsieh:04,Janis1:09,Doppler1:09,Yu2:12,Gu:12,Xiao:11,Kaufman:08,Fodor:12,Lei:12, Janis2:09,Yu4:09, chandrasekhar2009power},
D2D power control strategies are developed and evaluated in a deterministic D2D link deployment scenario. For a random network model, spectrum sharing between ad hoc and cellular networks was studied in \cite{Huang:07, lee2011spectrum, lin2013optimal} but power control -- an essential component of D2D underlaid cellular networks --  has not been addressed. Power control has been studied in other random ad hoc networks without considering cellular networks (see e.g. \cite{baccelli2003downlink, jindal2008fractional, net:Zhang12tcom}). In our paper, we propose power control algorithms and analyze their performance in a D2D underlaid cellular network.

\subsection{Contributions}
In this paper, we consider a D2D underlaid cellular network in which an uplink cellular user intends to communicate with the BS while multiple D2D links coexist in the common spectrum. In such a network, we model the D2D user's (transmitter's) locations using a spatial Poisson point process (PPP). The rationale is that stochastic geometry is an useful tool to model irregular spatial structure of D2D locations and analytically quantify the interference in D2D underlaid cellular networks. In this D2D underlaid cellular system, we propose a centralized and a distributed power control algorithm. The main idea of the centralized algorithm is to design the transmit power of mobile users so as to maximize the signal-to-interference-plus-noise ratio (SINR) of the cellular link while satisfying the individual target SINR constraints for D2D links. Using the fact that the centralized power allocation problem is convex, we solve it with a feasibility set increment technique. A main observation is that the centralized power control approach is possible to significantly improve the overall cellular network throughput due to the newly underlaid D2D links while guaranteeing the coverage probability of pre-existing cellular links.

We also propose a simple distributed on-off power control algorithm. Note that the centralized algorithm requires global channel state information (CSI) possibly at  a centralized controller, which may incur high CSI feedback overhead. To resolve this issue,  the proposed on-off power control method requires CSI knowledge about the direct link between the transmitter and its corresponding receiver only. In particular, for the distributed power control method, we derive analytic expressions including the coverage probabilities of both cellular and D2D links and the sum rate of D2D links. One important insight obtained from the analysis is that  on-off power control strategy for the uplink user is actually  optimal in terms of the coverage probability of the cellular link, agreeing with the finding in ad hoc networks \cite{net:Zhang12tcom}. Further, we derive the optimal D2D transmission probability which maximizes the sum rate of D2D links when the distributed on-off power control algorithm is used. In contrast to the centralized power control method, the distributed power control algorithm is not sufficient to guarantee  reliable cellular communication, though it does improve the cellular network throughput by additional D2D communication. We verify the results by simulating two different D2D link deployment scenarios.

The remainder  of this paper is organized as follows. In Section II, the proposed model for D2D underlaid cellular networks is described. The proposed power control algorithms are presented in Section III. In Section IV, the analytical expressions for the coverage probabilities of the cellular and typical D2D link are derived and validated through comparison with the simulation results. For the distributed power control, the sum rate of D2D links is derived in Section V. Simulation results are provided in Section VI to compare the performance of the proposed algorithms, which are followed by our conclusions in Section VII.

\section{System Model}

In this section, we  present the system model and describe network metrics that will be used in this paper.


\begin{figure}
\centering
\includegraphics[width=9cm]{./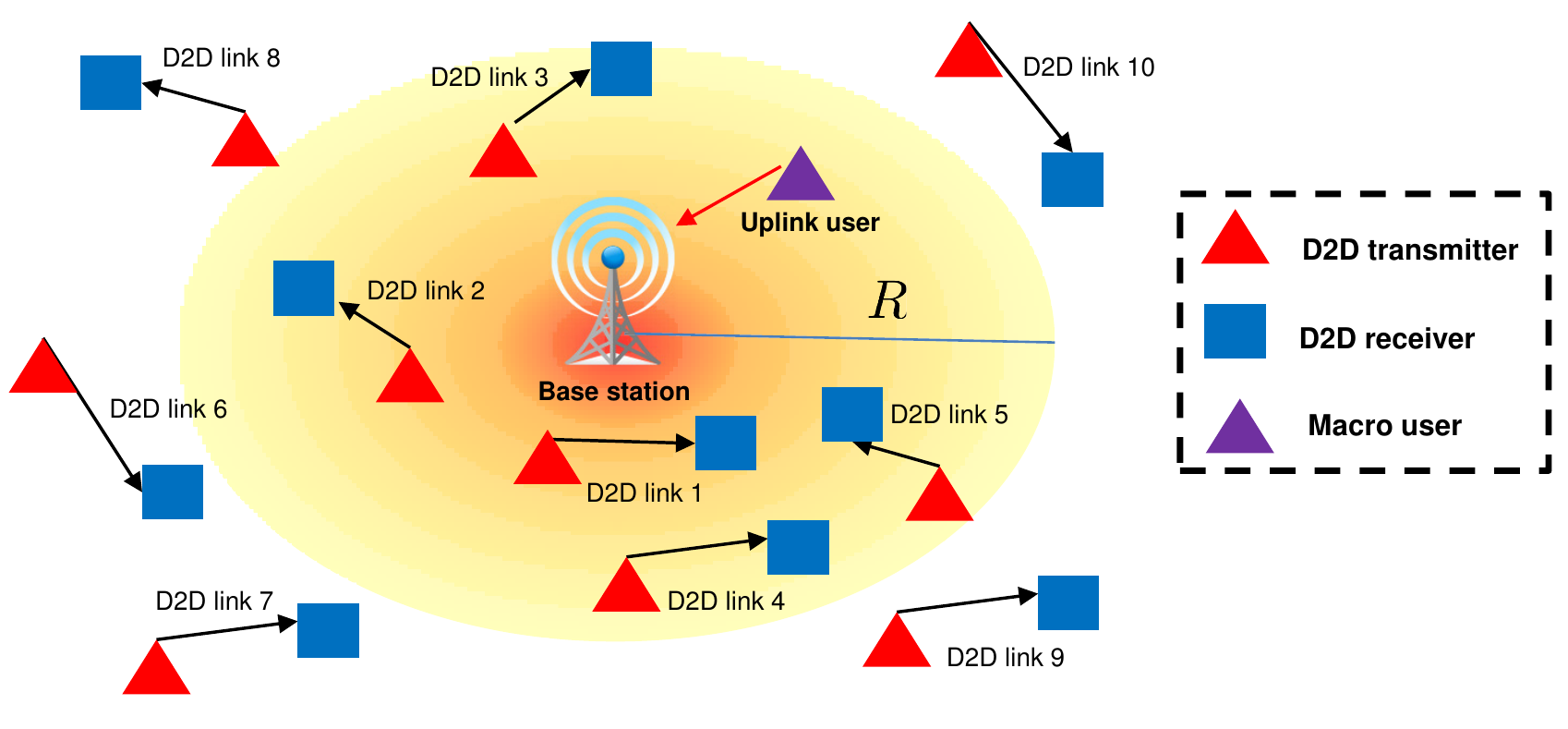}
\caption{A single-cell D2D underlaid cellular system: one macro user establishes a cellular link with the BS while five active D2D links are established in a circular disk centered at the BS and with radius $R$. In this model, the active D2D links outside the circular disk are considered as out-of-cell D2D interference; however, out-of-cell interference from the marcro users belonging to the other cells is ignored.}
\label{fig:scenario}\vspace{-0.6cm}
\end{figure}

We consider a D2D underlaid cellular network, as shown in Fig. \ref{fig:scenario}. In this model, let the circular disk $\mathcal{C}$ with radius $R$ denotes the coverage region of a BS centered at the origin. We assume that one cellular uplink user is uniformly located in this region. Further, we assume that the locations of the D2D transmitters are distributed in the whole $\mathbb{R}^2$ plane according to a homogeneous PPP $\Phi$ with density $\lambda$. The associated receiver with a D2D transmitter is located at a fixed distance away with isotropic direction. We assume all nodes have one antenna.

Under the given assumptions, the number of D2D transmitters in $\mathcal{C}$ is a Poisson random variable with mean $\b E[K]= \lambda \pi R^2$. Given a particular realization of the PPP $\Phi$, the received signals at D2D receiver $k$ and the BS are written as
\begin{align}
{y}_k =& {h}_{k,k}d_{k,k}^{-\frac{\alpha}{2}}{s}_{k}\!+\! {h}_{k,0}d_{k,0}^{-\frac{\alpha}{2}}{s}_{0} \!+\!\!\!\!\! \sum_{\ell=1, \ell\neq k}^{K}\!\!\!\!\!{h}_{k,\ell}d_{k,\ell}^{\frac{\alpha}{2}}{s}_{\ell}\!+\! n_{k} ,    \\
{ y}_{0} =&{ h}_{0,0}d_{0,0}^{-\alpha/2}s_{0}+\sum_{k=1}^{K}{ h}_{0,k}d_{0,k}^{-\alpha/2}s_{k} + { n}_{0},
\end{align}
where subscript $0$ is used for the uplink signal to the BS and subscript $k, k\neq 0,$ are used for D2D links; $s_k$ and ${s}_{0}$ denote the signal sent by D2D transmitter $k$ and the uplink user; $y_k$ and ${ y}_{0}$ represent the received signal at D2D receiver $k$ and the BS;
${n}_k$ and ${ n}_{0}$ denote the additive noise at D2D receiver $k$ and the BS distributed as $\mathcal{CN}(0,\sigma^2)$; ${{h}_{k,\ell}}$ and ${ h}_{0,k}$
represent the distance-independent fading from D2D transmitter $\ell$ to receiver $k$ and the channel from D2D transmitter $k$ to the BS, and are independently distributed as $\mathcal{CN}(0,1)$. Here, we assume the distance dependent path-loss model, i.e., $d_{k,\ell}^{-\alpha}$ for all $k,\ell$ where ${d}_{k,j}$ denotes the distance from transmitter $j$ to receiver $k$ and $\alpha$ is the path-loss exponent. The transmit power satisfies the peak power constraints, i.e., $|{s}_{0}|^2\leq P_{\rm max, c}$, $|{s}_{k}|^2\leq P_{\rm max, d}$ for $k\in \{1,2,\ldots,K\}$.


Then the SINR at D2D receiver $k$ and the BS are given by
\begin{eqnarray}
\textrm{SINR}_k(K,{\bf p})&=&\frac{|h_{k,k}|^2d_{k,k}^{-\alpha}p_k}{|h_{k,0}|^2d_{k,0}^{-\alpha}p_0 +\sum_{\ell\neq k}^{K}|h_{k,\ell}|^2d_{k,\ell}^{-\alpha}p_{\ell} +\sigma^2},  \\
\textrm{SINR}_0(K,{\bf p})&=&\frac{|{ h}_{0,0}|^2d_{0,0}^{-\alpha}p_0}{\sum_{k=1}^{K}|{h}_{0,k}|^2d_{0,k}^{-\alpha}p_{k} +\sigma^2},
 \end{eqnarray}
where ${\bf p}=[p_0,p_1,\ldots,p_K]^T$ denotes transmit power profile vector with $p_i$ being the transmit power of transmitter $i$.

Note that our system model ignores out-of-cell interference from macro users in the other cells. Nevertheless, the proposed model is able to capture the effect of dominant interference, which mainly determines the network performance. For the case of the uplink user transmission, the dominant interferer is the nearest D2D transmitter in the cell because the interference power of the nearest D2D transmission is strong than the power of the out-of-cell interference with high probability. For the case of the typical D2D transmission, the dominant interferer to any D2D link is either the uplink transmission or the nearest D2D transmission in the cell. Therefore, our system model captures the dominant interference effect for the both uplink and D2D links; it is possible to offer a tight upper performance of the D2D underlaid cellular system.


We are interested in the coverage probability of the cellular link and D2D links. The cellular coverage probability is defined as
\begin{eqnarray}
\bar{P}^{(C)}_{\textrm{cov}} (\beta_0) = \b E[ {P}^{(C)}_{\textrm{cov}} ( {\bf p}, \beta_0) ] =  \b E[\mathbb{P}(\textrm{SINR}_0(K,{\bf p}) \geq \beta_0)],
\end{eqnarray}
where $\beta_0$ represents the minimum SINR value for reliable uplink connection. Similarly, the D2D coverage probability is defined as
\begin{eqnarray}
\bar{P}^{(D)}_{\textrm{cov}} (\beta_k)= \b E[ {P}^{(D)}_{\textrm{cov}} ( {\bf p},\beta_k) ] =  \b E[\mathbb{P}(\textrm{SINR}_k(K,{\bf p}) \geq \beta_k)],
\end{eqnarray}
where $\beta_k$ represents the  minimum SINR value for reliable D2D link connections. Further, we define the ergodic sum rate of D2D links as
 \begin{eqnarray}
R^{(D)}=\b{E}\left[ \sum_{k=1}^K \log_2 \left( 1\!+\! \textrm{SINR}_k(K,{\bf p})  \right)\right]\!.
 \end{eqnarray}


%
\section{Power Control Algorithms}
When the global CSI is available at the central controller, a centralized power control algorithm is proposed, which maximizes the SINR of the cellular link while satisfying the SINR constraints for both the cellular link and D2D links. Further, when the transmitter has CSI of the direct link of the corresponding receiver only, a distributed on-off power control algorithm is proposed.

\vspace{-0.3cm}
\subsection{Centralized Power Control}
A main difference between ad hoc networks and underlaid D2D cellular networks is that centralized power control is possible when the D2D links are managed by the BS. For other management strategies, centralized power control is able to provide an upper bound on what can be achieved with more decentralized algorithms.

Suppose that the BS has global channel state information (CSI). Under this assumption, the centralized power control problem is formulated as\begin{eqnarray}
\max_{\{p_0,p_1,\ldots,p_K \}}&& \frac{G_{0,0}p_0}{\sum_{k=1}^K G_{0,k}p_{k} +\sigma^2} \nonumber \\
 \textrm{subject to}&&  \frac{G_{0,0}p_0}{\sum_{k=1}^K G_{0,k}p_{k} +\sigma^2}  \geq \beta_0\nonumber \\
&&  \frac{G_{k,k}p_k}{G_{k,0}p_0+\sum_{\ell\neq k}^K G_{k,\ell}p_{\ell} +\sigma^2}  \geq \beta_k,  \nonumber \\
&& 0  \leq p_0 \leq { P_{\textrm{max, c}}}, \nonumber \\
&& 0  \leq p_k \leq { P_{\textrm{max, d}}}, 
%
\end{eqnarray}
where $G_{k,\ell}=|{ h}_{k,\ell}|^2d_{k,\ell}^{-\alpha}$ and $\ell\in\{1,2,\ldots,\}$ and $k\in\{1,2,\ldots,\}$. This optimization problem is compactly written in a vector form as
\begin{eqnarray}
\max_{{\bf p}}&& \frac{{\bf g}_{0}^{T}{\bf p}}{{\bf g}_{0}^{cT} {\bf p}+\sigma^2} \nonumber \\
 \textrm{subject to}&& \left({\bf I}-{\bf F}\right){\bf p} \geq {\bf b} \nonumber \\
&& {\bf 0} \leq {\bf p}\leq {\bf p}_{\textrm{max}}, \label{eq:opt}
\end{eqnarray}
where ${\bf g}_{0}^T=[G_{0,0}, 0, \ldots, 0] $, ${{\bf g}_{0}^{c}}^T=[0, G_{0,1}, G_{0,2},\ldots, G_{0,K}] $, ${\bf p}_{\textrm{max}}=[P_{\textrm{max, c}}, P_{\textrm{max, d}}, \ldots, P_{\textrm{max, d}}]^T$, and the normalized channel gain matrix ${\bf F}$ and  target SINR vector ${\bf b}$ are defined as
\begin{eqnarray}
{\bf F}_{k,\ell}&=& \left\{
\begin{array}{l l}
  0,  \quad  \quad & k=\ell,\\
  \frac{\beta_{k} G_{k,\ell}}{G_{k,k}},  \quad & k\neq \ell. \nonumber
\end{array} \right.\nonumber \\
{\bf b}&=&\left[\frac{\beta_0\sigma^2}{G_{0,0}},~\frac{\beta_1\sigma^2}{G_{1,1}},~\frac{\beta_2\sigma^2}{G_{2,2}},~\ldots,\frac{\beta_K\sigma^2}{G_{K,K}}\right]^T.\nonumber
\end{eqnarray}

Since the objective function (linear-fractional function) is quasi-convex and the constraint set  is convex (a polytope in particular) with respect to power profile vector ${\bf p}$, the optimal solution is able to be obtained by using standard convex programming tools, provided that the feasible set is nonempty. Note that the matrix ${\bf F}=[{\bf f}_0,{\bf f}_1,\ldots, {\bf f}_K]$  is comprised of nonnegative elements and is irreducible because all the active D2D links interfere each other. By the Perron-Frobenious theorem, the following well-known lemma proved in  \cite{Foschini:93} gives a necessary and sufficient condition on the feasibility of the optimization problem  (\ref{eq:opt}).
\begin{lem}\cite{Foschini:93}
The constraint set in the optimization problem (\ref{eq:opt}) is nonempty if and only if the maximum modulus eigenvalue of  ${\bf F}$ is less than one, i.e.,
$
 \rho({\bf F}) <1,
$
where $\rho(\cdot)$ denotes the spectral radius of a matrix.
\end{lem}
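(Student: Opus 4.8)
The plan is to prove both directions of the equivalence using exactly the Perron-Frobenius structure the statement already invokes. Since ${\bf F}$ is entrywise nonnegative and irreducible, the Perron-Frobenius theorem guarantees that $\rho({\bf F})$ is itself an eigenvalue and that both ${\bf F}$ and ${\bf F}^T$ admit strictly positive eigenvectors for it. I would set up this machinery first, recording in particular a positive left eigenvector ${\bf v}>{\bf 0}$ with ${\bf v}^T{\bf F}=\rho({\bf F}){\bf v}^T$, since the necessity argument rests entirely on it. I would also note that ${\bf b}>{\bf 0}$ componentwise, because each target SINR $\beta_k>0$ and each direct-link gain $G_{k,k}>0$; this strict positivity is what makes the contradiction in the second direction bite.

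For sufficiency, assume $\rho({\bf F})<1$. Then ${\bf I}-{\bf F}$ is invertible and its inverse equals the Neumann series $\sum_{n=0}^{\infty}{\bf F}^{n}$, which is entrywise nonnegative because every power ${\bf F}^{n}$ is. Hence the candidate ${\bf p}^\star=({\bf I}-{\bf F})^{-1}{\bf b}$ is nonnegative, and by construction $({\bf I}-{\bf F}){\bf p}^\star={\bf b}$, so it satisfies the SINR constraint with equality. This exhibits an explicit feasible point and shows the constraint set is nonempty.

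For necessity, suppose $\rho({\bf F})\ge 1$ and argue by contradiction that a feasible ${\bf p}\ge{\bf 0}$ exists with $({\bf I}-{\bf F}){\bf p}\ge{\bf b}$. Left-multiplying by the positive row vector ${\bf v}^T$ preserves the inequality, giving ${\bf v}^T({\bf I}-{\bf F}){\bf p}\ge{\bf v}^T{\bf b}>0$. The left-hand side collapses to $(1-\rho({\bf F}))\,{\bf v}^T{\bf p}$, which is non-positive since $\rho({\bf F})\ge 1$ and ${\bf v}^T{\bf p}\ge 0$. This contradicts the strict positivity of ${\bf v}^T{\bf b}$, so no feasible point can exist.

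The main obstacle is conceptual rather than computational: the clean argument above establishes feasibility of the pure interference-limited region $\{{\bf p}\ge{\bf 0}:({\bf I}-{\bf F}){\bf p}\ge{\bf b}\}$, whereas the constraint set in (\ref{eq:opt}) additionally carries the peak-power cap ${\bf p}\le{\bf p}_{\textrm{max}}$. With that cap present, $\rho({\bf F})<1$ remains necessary but is not automatically sufficient unless ${\bf p}^\star$ happens to respect ${\bf p}_{\textrm{max}}$. The delicate step is therefore to justify treating the cap as inactive---for instance by invoking the result for the interference-limited relaxation or by taking ${\bf p}_{\textrm{max}}$ large enough---which is precisely what lets one quote the Foschini--Miljanic condition verbatim. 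I would flag this explicitly and confine the proof to the interference-limited region so that it matches the cited statement.
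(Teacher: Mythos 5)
Your proof is correct, but note an important contextual point: the paper does not prove this lemma at all---it simply quotes it from the cited Foschini--Miljanic reference \cite{Foschini:93}. So there is no ``paper proof'' to match; what you have supplied is a self-contained derivation of the cited result, via the standard Perron--Frobenius route: sufficiency from the nonnegative Neumann series $({\bf I}-{\bf F})^{-1}=\sum_{n\ge 0}{\bf F}^n$ applied to ${\bf b}$, and necessity from pairing a strictly positive left Perron eigenvector ${\bf v}$ against $({\bf I}-{\bf F}){\bf p}\ge{\bf b}$ to get the contradiction $(1-\rho({\bf F}))\,{\bf v}^T{\bf p}\ge {\bf v}^T{\bf b}>0$. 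Both directions are sound, and your necessity step correctly uses ${\bf b}>{\bf 0}$, which holds here because $\sigma^2>0$ and $\beta_k, G_{k,k}>0$ (it would degenerate in the interference-limited case $\sigma^2=0$, but that is not the setting of the lemma). Your closing flag is also well taken and is, if anything, the most valuable part of the review: the constraint set in (\ref{eq:opt}) includes the peak-power cap ${\bf 0}\le{\bf p}\le{\bf p}_{\textrm{max}}$, under which $\rho({\bf F})<1$ remains necessary (any capped-feasible point is feasible for the relaxation) but is \emph{not} sufficient unless $({\bf I}-{\bf F})^{-1}{\bf b}\le{\bf p}_{\textrm{max}}$, since $({\bf I}-{\bf F})^{-1}{\bf b}$ is the componentwise-minimal solution of $({\bf I}-{\bf F}){\bf p}\ge{\bf b}$, ${\bf p}\ge {\bf 0}$. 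The paper's statement is therefore slightly loose as written, and your restriction of the equivalence to the uncapped region $\{{\bf p}\ge{\bf 0}:({\bf I}-{\bf F}){\bf p}\ge{\bf b}\}$ is exactly the right way to make the quoted condition literally true.
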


\begin{table*}
\caption{Proposed Centralized Power Control Algorithm }
\centerline{
     \begin{tabular}{c|c}
	\hline
	\textbf{Step} & Algorithm  \\
	\hline
	Initialization       & Set initial ${\bf F}^{\ell}$ for $\ell=0$, assuming $K$ D2D links are all active.  \\
	Step 1       & Test feasibility condition $\rho({\bf F}^{\ell})<1$. If this condition is satisfied, go to Step 5. Otherwise, go to Step 2. \\
	Step 2 & Pick the column of ${\bf F}^{\ell}$ such that $\hat{k}=\arg \max_{k\in \mathcal{K} /\{0\}}\|{\bf f}^{\ell}_k\|_2$.  \\
	Step 3   &Generate a reduced matrix ${\bf \hat{F}}^{\ell}$ by removing the $\hat{k}$-th column and row vectors in ${\bf F}^{\ell}$. \\
	Step 4 &Update ${\bf F}^{\ell+1} = {\bf \hat{F}}^{\ell}$ by increasing $\ell=\ell +1$. Go to Step 1.\\
	Step 5  & Solve the optimization problem in (\ref{eq:opt}).\\
	\hline
    \end{tabular}}
    \label{tab:alg}
\end{table*}

We next describe our proposed centralized algorithm to solve the optimization problem (\ref{eq:opt}). First we assume that D2D receivers can feedback all the perfect normalized channel gains $G_{i,k}$ and target SINR information $\beta_i$  to the BS. Using this assumption, the BS then computes the transmit power used for both D2D transmitters and the uplink user. Note that the feasible set should be nonempty to obtain the optimal solution, i.e., $ \rho({\bf F}) <1$. Since the normalized channel gains $G_{i,k}$, however, are random variables (the locations of all the transmit nodes are random variables), there exists a non-zero probability that the power control solution is infeasible, i.e., $\b P ( \{\rho({\bf F}) \geq 1\})\neq 0$, especially when the number of D2D links $K$ is large. When the solution is infeasible, an admission control method is needed in conjunction with the power control algorithm to provide a feasible solution to the power control problem by selecting a subset of D2D links. This D2D link selection problem may be solved by brute-force search, which requires $\sum_{r=1}^{K}\binom{K}{r} $ computations. The computational complexity grows exponentially with $K$. Instead of brute-force search, we propose an efficient D2D link selection algorithm with low computational complexity for this problem. The key idea is to drop D2D communication links successively that causes the maximum sum of the interference power in the network until the feasibility condition is satisfied. For $K$ given D2D links, we first test feasibility condition of the optimization problem in (\ref{eq:opt}). If $ \rho({\bf F}) >1$, i.e., the feasibility set is empty, we select the $\hat{k}$-th D2D transmitter such that it creates the maximum sum of interference power to all other receivers, i.e., $\hat{k}=\arg \max_{k\in \mathcal{K} /\{0\}}\|{\bf f}_k\|_2$, and then accordingly we remove the $\hat{k}$-th row and column to reduce the size of matrix ${\bf F}$. We keep reducing the size of matrix ${\bf F}$ until the feasibility condition is satisfied. Table \ref{tab:alg} summarizes the proposed D2D link selection method in conjunction with power control.

\vspace{-0.3cm}
\subsection{Distributed On-Off Power Control Algorithm}

In this subsection, we provide a distributed power control algorithm. The distributed power control is an effective interference mitigation method that requires no coordination between transmitters; the signaling overheads for sharing CSIT is not needed. In the absence of coordination, each D2D transmitter chooses its transmit power to maximize its own rate towards its intended receiver, disregarding the interference caused to the others. The proposed on-off  method is to select the D2D transmit power from the decision set $\{0,P_{\textrm{max, d}}\}$ solely based on knowledge of the direct link information and a nonnegative threshold $G_{\textrm{min}}$ that is fixed and known by all users. Specifically, the power used by D2D pair $k$ is $P_{\textrm{max, d}}$ when the link quality is good in the sense that $|h_{k,k}|^2d_{k,k}^{-\alpha}>G_{\textrm{min}}$, and $0$ otherwise. Mathematically,
\begin{eqnarray}
p_k&=& \left\{
\begin{array}{l l}
  P_{\textrm{max, d}} & \textrm{with $P_s$} \\
 0  & \textrm{with $1-P_s$}.
\end{array} \right.
\end{eqnarray}
where $P_s$ denotes the transmit probability given by
\begin{align}
P_s&=\b{P}[|h_{k,k}|^2d_{k,k}^{-\alpha}>G_{\textrm{min}}] = \exp\left(- G_{\textrm{min}}d_{k,k}^{\alpha}\right) .
\end{align}

Note that the proposed power control method is distributed as
each D2D transmitter decides its transmit power by the own channel gain $|h_{k,k}|^2$ and threshold $G_{\rm min}$. For a given distribution of the channel gain, selecting a proper threshold $G_{\rm min}$ (the transmission probability $P_s$) plays an important role in determining the sum rate performance of the D2D links. On the one hand, choosing a large $G_{\rm min}$ (a small $P_s$), reduces the inter-D2D interference. On the other hand, larger $G_{\rm min}$ (smaller $P_s$) leads to smaller number of active D2D links within the disk. Therefore, a good choice of $G_{\rm min}$ balancing these two competing factors leads to achieve a high D2D sum rate performance. This motives us to optimize the $G_{\min}$ ($P_s$) for maximizing the D2D sum rate performance. That problem is tackled in Section V.

\textbf{Remark:} The proposed power control algorithm may be useful in non-random networks because it can be applicable in any realization of the proposed random network. Therefore, the randomness in the network modeling is not a key part of the algorithm, rather it is a component of the analysis to show that it works.

\section{Coverage Probability Analysis For Distributed Power Control}

In this section, we derive the cellular link coverage probability, propose an optimal power control strategy for the cellular link under the average transmit power constraint, and derive the D2D link coverage probability. To analyze the coverage probabilities using the tool of stochastic geometry, we assume that the transmit power of each D2D transmitter is i.i.d. with distribution function $F_{p_k} (\cdot)$ and that the transmit power of the uplink user is independent and has distribution function $F_{p_0}(\cdot)$. Note that the coverage probability analysis we provide in this section is valid for any distributed power control algorithms that select its own transmit power independently of the transmit power used at the other D2D transmitters.

\subsection{Cellular Link Coverage Probability}

Assume that the BS is located at the origin. The $\sinr$ of the typical uplink is given by
\begin{align}
\sinr_0 = \frac{ p_0 |h_{0,0}|^2  d_{0,0}^{-\alpha} }{ \sum_{k \in \Phi  } p_{k} |h_{0,k}|^2  d_{0,k}^{-\alpha} + \sigma^2}.
\end{align}
Further, since the cellular user's location is distributed uniformly in the circle with radius $R$, the distribution function of the distance $d_{0,0}$ of the cellular link  is given by
\[ F_{d_{0,0}} (r) = \left\{ \begin{array}{ll}
         0 & \mbox{if $r < 0$};\\
         \frac{r^2}{R^2} & \mbox{if $0\leq r \leq R$};\\
        1 & \mbox{if $r\geq R$}.\end{array} \right. \]
The following theorem provides an analytical formula for the uplink coverage probability.
\vspace{0.2cm}
\begin{theorem}
The cellular link coverage probability is
\begin{align}
\bar{P}^{(C)}_{\textrm{cov}} (\beta_0)= \b E_{X} \left[ e^{  -a_1 X - a_2 X^{\frac{2}{\alpha}}  }  \right],
\label{eq:cellularcov}
\end{align}
where $a_1 = \sigma^2\beta_0$, $a_2 =  \frac{\pi \lambda \beta_0^{ \frac{2}{\alpha} } }{\textrm{sinc} (\frac{2}{\alpha})}   \b E \left[ p_{k}^{ \frac{2}{\alpha} }\right]$, $X = p_0^{-1} d_{0,0}^{\alpha}$ with cdf
$
F_X (x) = \int  F_{ d_{0,0} } ( x^{\frac{1}{\alpha}} p^{\frac{1}{\alpha}} ) \d F_{p_0} (p).
$
\label{Theorem1}
\vspace{0.2cm}
\end{theorem}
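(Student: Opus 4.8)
The plan is to condition on the scalar $X=p_0^{-1}d_{0,0}^{\alpha}$ and on the aggregate interference, exploit the Rayleigh (unit-mean exponential) statistics of the direct-link power $|h_{0,0}|^2$ to collapse the coverage probability into an exponential, and then evaluate the remaining interference average as a Laplace transform using the probability generating functional (PGFL) of the PPP $\Phi$.

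First I would rewrite the coverage event. Writing $I=\sum_{k\in\Phi}p_k|h_{0,k}|^2 d_{0,k}^{-\alpha}$ for the interference, the event $\{\sinr_0\ge\beta_0\}$ is equivalent to $|h_{0,0}|^2\ge\beta_0 p_0^{-1}d_{0,0}^{\alpha}(I+\sigma^2)=\beta_0 X(I+\sigma^2)$. Since $|h_{0,0}|^2$ is unit-mean exponential and independent of $X$ and $I$, conditioning on $(X,I)$ gives
\[
\b P\!\left(\sinr_0\ge\beta_0 \mid X,I\right)=e^{-\beta_0 X(I+\sigma^2)}=e^{-a_1 X}\,e^{-\beta_0 X I},
\]
where the noise contribution has already been isolated as $e^{-\sigma^2\beta_0 X}=e^{-a_1 X}$. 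Averaging over the interference then leaves the Laplace transform $\c L_I(\beta_0 X)=\b E_I[e^{-\beta_0 X I}]$ evaluated at $s=\beta_0 X$.

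The main work, and the step I expect to be the chief obstacle, is computing $\c L_I(s)$ in closed form. I would invoke the PGFL of the homogeneous PPP of density $\lambda$, treating the per-point marks $(p_k,|h_{0,k}|^2)$ as i.i.d.\ and independent of the locations, to obtain
\[
\c L_I(s)=\exp\!\left(-\lambda\!\int_{\b R^2}\b E_{p,h}\!\left[1-e^{-s\,p|h|^2|x|^{-\alpha}}\right]\d x\right).
\]
Averaging over the exponential fading first gives $\b E_h[e^{-s p|h|^2 r^{-\alpha}}]=(1+s p r^{-\alpha})^{-1}$, so the integrand becomes $s p r^{-\alpha}/(1+s p r^{-\alpha})$. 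Passing to polar coordinates and substituting $t=r^{\alpha}/(sp)$ reduces the radial integral to $\tfrac{(sp)^{2/\alpha}}{\alpha}\int_0^{\infty}\tfrac{t^{2/\alpha-1}}{1+t}\,\d t$, which I would evaluate via the Euler reflection formula $\int_0^{\infty}\tfrac{t^{a-1}}{1+t}\d t=\pi/\sin(\pi a)$, valid for $0<a=2/\alpha<1$ (i.e.\ $\alpha>2$). Collecting constants and rewriting $\pi/\sin(2\pi/\alpha)$ through $\sinc(2/\alpha)=\sin(2\pi/\alpha)/(2\pi/\alpha)$ yields $\c L_I(s)=\exp(-\tfrac{\pi\lambda s^{2/\alpha}}{\sinc(2/\alpha)}\b E[p_k^{2/\alpha}])$; setting $s=\beta_0 X$ produces exactly the factor $e^{-a_2 X^{2/\alpha}}$.

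Combining the two factors gives $\b P(\sinr_0\ge\beta_0\mid X)=e^{-a_1 X-a_2 X^{2/\alpha}}$, and taking the outer expectation over $X$ delivers the claimed formula. To finish I would identify the law of $X$: conditioning on $p_0=p$ and using the distance cdf, $F_X(x)=\b P(d_{0,0}\le(xp)^{1/\alpha})=\int F_{d_{0,0}}(x^{1/\alpha}p^{1/\alpha})\,\d F_{p_0}(p)$, which is precisely the stated cdf. The delicate points to verify are the independence of the direct-link fading from the power--distance variable $X$ (so the exponential conditioning is exact), the independence of the interference marks from the PPP locations (so the PGFL applies, with silent transmitters $p_k=0$ automatically contributing nothing since $0^{2/\alpha}=0$), and the convergence of the radial integral, which is exactly where the path-loss condition $\alpha>2$ enters.
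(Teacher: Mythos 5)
Your proof is correct and follows essentially the same route as the paper's: use the unit-mean exponential distribution of $|h_{0,0}|^2$ to turn the coverage probability into a product of the noise factor $e^{-a_1 X}$ and the interference Laplace transform evaluated at $\beta_0 X$, then identify the law of $X$. The only difference is that you derive the PPP Laplace transform $\c L_I(s)=\exp\bigl(-\tfrac{\pi\lambda}{\sinc(2/\alpha)}\b E[p_k^{2/\alpha}]s^{2/\alpha}\bigr)$ from the PGFL and the reflection formula, whereas the paper simply states it; this is a welcome filling-in of detail, not a different approach.
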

\begin{proof}
See Appendix \ref{Lin:pro:1:proof}.
\end{proof}
\vspace{0.2cm}

Theorem \ref{Theorem1} provides an intuition that how important network parameters affect the cellular link coverage probability. For example, we observe that $\bar{P}^{(C)}_{\textrm{cov}}(\beta_0)$ depends on two D2D-related network parameters: $\lambda$ and $\b E\left[ p_k^{\frac{2}{\alpha}}\right]$. In particular, $\bar{P}^{(C)}_{\textrm{cov}}(\beta_0)$ decreases as the density $\lambda$ of D2D transmitters increases, which is intuitive as higher D2D link density causes more interference to the cellular link. Further, the random D2D power control $p_k$ affects $\bar{P}^{(C)}_{\textrm{cov}}(\beta_0)$ only through its $\frac{2}{\alpha}$-th moment. This implies that the system can control the impact of D2D links on the cellular link by constraining $\b E\left[ p_k^{\frac{2}{\alpha}}\right]$  and then find the optimal distribution of $p_0$ to maximize cellular link coverage probability.

\vspace{2mm}
\textbf{Example 1 (A Closed Form Expression):} Let us consider the case where the uplink user uses a constant transmit power $p_0=P_{\rm max, c}$ and ignore the noise $\sigma^2=0$. With path-loss exponent  $\alpha=4$, a standard value in terrestrial outdoor wireless systems, the expression of $\bar{P}^{(C)}_{\textrm{cov}}(\beta_0)$ simplifies substantially as
\begin{align}
\bar{P}^{(C)}_{\textrm{cov}}(\beta_0)&=\int_0^R \exp\left(\frac{a_2}{\sqrt{P_{\rm max, c}}}r^2\right)\frac{2r}{R^2} \d r \nonumber\\
&=\frac{1-\exp\left(-\frac{\pi \lambda \sqrt{\beta_0} }{\textrm{sinc} (1/2)\sqrt{P_{\rm max, c}}}   \b E [ \sqrt{p_{k}} ]R^2\right)}{\frac{\pi \lambda \sqrt{\beta_0} }{\textrm{sinc} (1/2)\sqrt{P_{\rm max, c}}}   \b E [ \sqrt{p_{k}} ]R^2}.
\end{align}
Further, if the D2D transmitters send the signal using power $P_{\rm max, d}$ with probability $P_s=0.5$, the coverage probability becomes
\begin{align}
\bar{P}^{(C)}_{\textrm{cov}}(\beta_0)
&=\frac{1-\exp\left(-\frac{\pi (\lambda/2) R^2}{\textrm{sinc} (1/2)} \sqrt{\frac{ P_{\rm max, d}}{P_{\rm max, c}}} \sqrt{\beta_0}   \right)}{\frac{\pi (\lambda/2) R^2}{\textrm{sinc} (1/2)} \sqrt{\frac{P_{\rm max, d}}{P_{\rm max, c}}} \sqrt{\beta_0} }. \label{eq:Pcov_UE_example}
\end{align}
This expression explicitly shows that the coverage performance of the cellular link is jointly determined by three factors: 1) the average number of active D2D transmitters $\b E[K]=\pi (\lambda/2) R^2$, 2) the power ratio between the cellular and the D2D user $\sqrt{\frac{P_{\rm max, d}}{P_{\rm max, c}}} $, and 3) the target threshold $\beta_0$. To validate our analysis, we compare the coverage probability expression in (\ref{eq:Pcov_UE_example}) with the simulation result. As illustrated in Fig \ref{fig:UE_cov}, the coverage probability performance of the uplink user is well matched with the corresponding Monte Carlo simulation over the entire range of $\beta_0$ and different $\lambda \in\{0.00002,0.00005\}$.

\begin{figure}
\centering
\includegraphics[width=9cm]{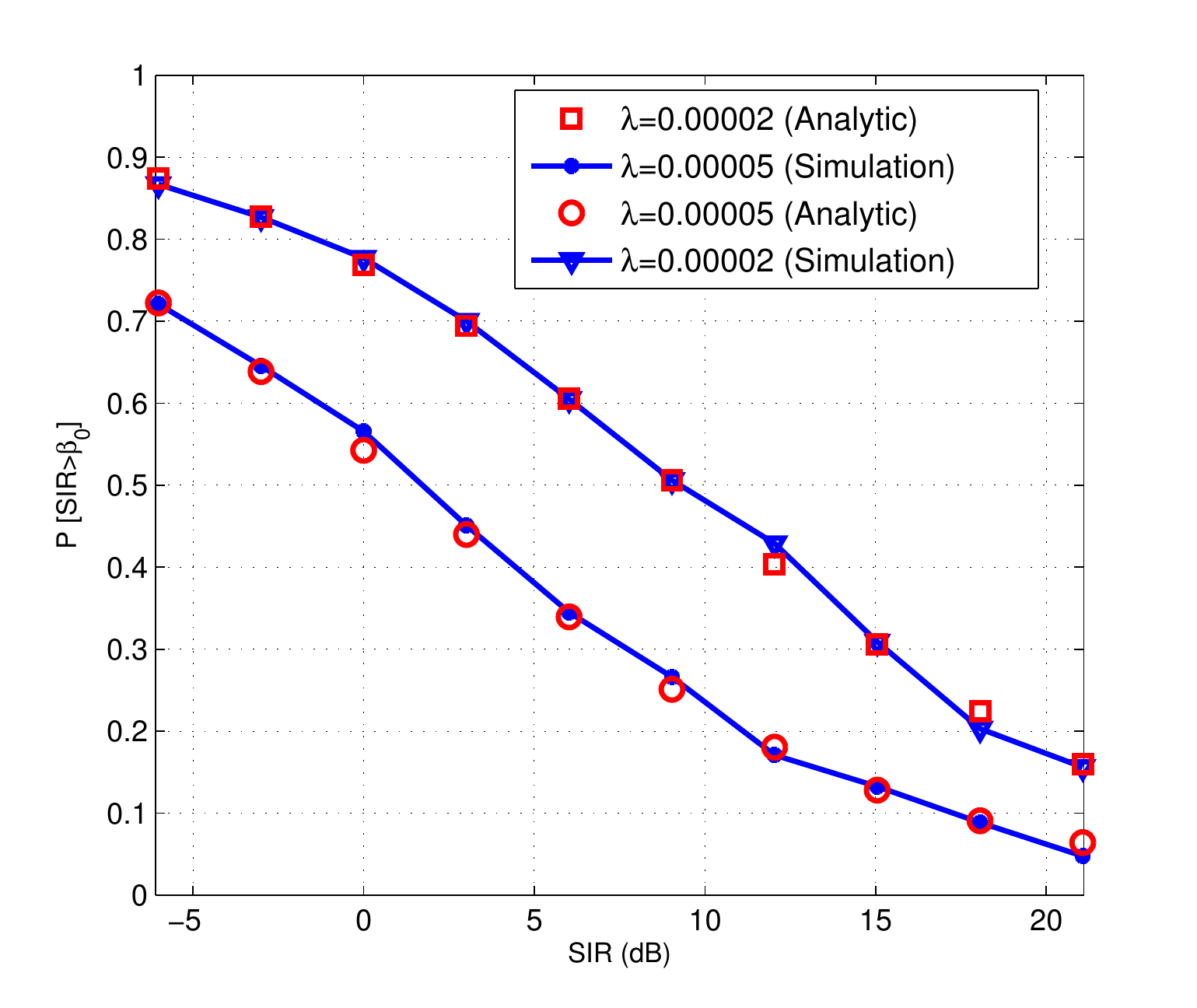}\vspace{-0.5cm}
\caption{Coverage probability performance of the uplink user with a set of parameters $d_{k,k}=50$ m, $R=500$ m, $P_{\rm max, c}=100$ mW, $P_{\rm max, d}=0.2$ mW, $\b P[G_{k,k}>G_{\rm min}=0.5]$, and $\lambda \in\{0.00002,0.00005\}$.}
\label{fig:UE_cov}\vspace{-0.4cm}
\end{figure}


We next provide a simple lower bound for $\bar{P}^{(C)}_{\textrm{cov}}(\beta_0)$, which is useful for an arbitrary path-loss exponent value and noise-limited case. The lower bound simply depends on the certain moments of $p_0$ and $d_{0,0}$ (rather than the distributions). This lower bound is formalized in the following corollary.
\begin{corollary}
Cellular link coverage probability $\bar{P}^{(C)}_{\textrm{cov}}$ can be lower bounded as
\begin{align}
\bar{P}^{(C)}_{\textrm{cov}} (\beta_0)&\geq \bar{P}^{(C)}_{\textrm{cov,lb}} (\beta_0) \nonumber \\
&= e^{  -a_1 \frac{2  }{2+\alpha} R^{\alpha} \cdot \b  E[ p_0^{-1}]  - a_2 (\frac{2}{2+\alpha})^{\frac{2}{\alpha}} R^2 \cdot \left( \b E[ p_0^{-1}]  \right)^{\frac{2}{\alpha}}  }.
\end{align}
\label{Lin:pro:2}
\end{corollary}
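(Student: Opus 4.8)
The plan is to bound the expectation in Theorem~\ref{Theorem1} by two successive applications of Jensen's inequality. Writing $Y = a_1 X + a_2 X^{2/\alpha}$ and using that $t \mapsto e^{-t}$ is convex, Jensen's inequality gives $\b E[e^{-Y}] \geq e^{-\b E[Y]}$, so that
\[
\bar{P}^{(C)}_{\textrm{cov}}(\beta_0) \;\geq\; e^{-a_1 \b E[X] - a_2 \b E[X^{2/\alpha}]}.
\]
It then suffices to evaluate $\b E[X]$ exactly and to upper-bound the fractional moment $\b E[X^{2/\alpha}]$, since $a_1,a_2 \geq 0$ and $e^{-t}$ is decreasing.

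Since $X = p_0^{-1} d_{0,0}^{\alpha}$ and the uplink transmit power $p_0$ is independent of the cellular-link distance $d_{0,0}$ (the very factorization underlying the expression for $F_X$ in Theorem~\ref{Theorem1}), we have $\b E[X] = \b E[p_0^{-1}]\,\b E[d_{0,0}^{\alpha}]$. Differentiating the stated cdf gives the density $f_{d_{0,0}}(r) = 2r/R^2$ on $[0,R]$, and a direct integration yields
\[
\b E[d_{0,0}^{\alpha}] = \int_0^R r^{\alpha}\,\frac{2r}{R^2}\,\d r = \frac{2}{2+\alpha}\,R^{\alpha}.
\]
Hence $\b E[X] = \frac{2}{2+\alpha}R^{\alpha}\,\b E[p_0^{-1}]$, which already reproduces the first term in the exponent of the claimed bound.

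For the second term I apply Jensen's inequality once more, now to the map $x \mapsto x^{2/\alpha}$. Because the path-loss exponent satisfies $\alpha \geq 2$, we have $2/\alpha \leq 1$ and this map is concave, so $\b E[X^{2/\alpha}] \leq (\b E[X])^{2/\alpha}$. As $a_2 \geq 0$, replacing $\b E[X^{2/\alpha}]$ by the larger quantity $(\b E[X])^{2/\alpha}$ yields a smaller value of the exponential, which therefore remains a lower bound on $\bar{P}^{(C)}_{\textrm{cov}}(\beta_0)$; substituting $\b E[X]$ then gives $(\b E[X])^{2/\alpha} = (\frac{2}{2+\alpha})^{2/\alpha} R^2 (\b E[p_0^{-1}])^{2/\alpha}$, matching the second exponent term and completing the proof. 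The one point that needs care is that the two Jensen steps run in opposite directions — convexity of the exponential lower-bounds the full expectation, whereas concavity of $x^{2/\alpha}$ upper-bounds the inner moment — so one must verify that the sign of $a_2$ together with the monotonicity of $e^{-t}$ makes the two moves compatible; the concavity step also relies on the standard assumption $\alpha \geq 2$.
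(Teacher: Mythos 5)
Your proof is correct, and it reaches the paper's bound by a genuinely different decomposition of the Jensen step. The paper works with the composite map $\phi(x) = e^{-a_1 x - a_2 x^{2/\alpha}}$ directly: it computes $\phi'$ and $\phi''$, observes that $\phi''(x)\geq 0$ for $x\geq 0$ because $\alpha>2$ makes the term $a_2\tfrac{2}{\alpha}\bigl(1-\tfrac{2}{\alpha}\bigr)x^{2/\alpha-2}$ nonnegative, and then applies Jensen's inequality once, $\b E[\phi(X)] \geq \phi(\b E[X])$, which is already the claimed bound; its moment computations, $\b E[X]=\b E[p_0^{-1}]\,\b E[d_{0,0}^{\alpha}]$ by independence and $\b E[d_{0,0}^{\alpha}]=\tfrac{2}{2+\alpha}R^{\alpha}$ by integration, are identical to yours. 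You instead split the argument into two elementary Jensen applications: convexity of $t\mapsto e^{-t}$ (needing no condition on $\alpha$) gives the intermediate bound $e^{-a_1\b E[X]-a_2\b E[X^{2/\alpha}]}$, and concavity of $x\mapsto x^{2/\alpha}$ for $\alpha\geq 2$, combined with $a_2\geq 0$ and monotonicity of the exponential, relaxes this to the stated form; your explicit check that the two inequalities run in compatible directions is exactly right. Both routes hinge on the same condition $2/\alpha\leq 1$ (this is precisely where the paper's $\phi''\geq 0$ would fail), so neither is more general, but yours avoids the second-derivative computation, makes transparent where the restriction on $\alpha$ enters, and shows in passing that the tighter intermediate bound
\begin{align}
\bar{P}^{(C)}_{\textrm{cov}}(\beta_0) \;\geq\; \exp\left( -a_1 \tfrac{2}{2+\alpha}R^{\alpha}\, \b E[ p_0^{-1}] \;-\; a_2 \tfrac{R^2}{2}\, \b E\left[ p_0^{-2/\alpha}\right] \right)
\end{align}
also holds, since $\b E[X^{2/\alpha}]=\b E[p_0^{-2/\alpha}]\,\b E[d_{0,0}^{2}]$ and $\b E[d_{0,0}^{2}]=R^2/2$. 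The paper's single-step form sacrifices this tightness in exchange for expressing the bound through the single power moment $\b E[p_0^{-1}]$, which is what the corollary statement requires.
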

\begin{proof}
See Appendix \ref{Lin:pro:2:proof}.
\end{proof}

\subsection{Optimal Cellular Link Power Control Strategy}
In this subsection, we provide an optimal power control strategy for the cellular link when the cellular user has location information of distance $d_{0,0}$. As shown in Theorem 1, the coverage probability of the cellular link is a function of the transmit power $p_0$ and the distance $d_{0,0}$ of the uplink user. Conditioning on the location for the uplink user, i.e. $d_{0,0}=d$, the cellular link coverage probability is reduced as $\phi (p_0) = e^{ -a_1 d^{\alpha} p_0^{-1} - a_2 d^2 p_0^{- \frac{2}{\alpha}} } $. Under the average and peak power constraints of the uplink transmission power $p_0$, the optimal distribution function of $p_0$, i.e., $F_{p_{0}}$, is obtained by solving the following optimization problem:
\begin{eqnarray}
\textrm{maximize} &&  \int \phi (p_0)   \d F_{p_{0}} (p_0)  \nonumber \\
 \textrm{subject to}&& \int p_0  \d F_{p_{0}} (p_0) = P_{\textrm{avg, c}} \nonumber \\
 && \int   \d F_{p_{0}} (p_0) = 1\nonumber \\
 && p_0 \leq P_{\rm max, c}. \label{eq:optimization_cellular}
\end{eqnarray}
Note that (\ref{eq:optimization_cellular}) is an infinite-dimensional optimization problem because the distribution function $F_{p_0}$ may have an infinite number of degrees of freedom. Although this class of optimization problem is not solvable in general, we are able to find the optimal solution of the distribution function $F_{p_0}$ thanks to a remarkably simple structure for the optimization problem in (\ref{eq:optimization_cellular}), which is stated in the following theorem.
\vspace{0.2cm}
\begin{theorem}
There exists a non-degenerate $p^\star (d) \in (0,\infty)$ that maximizes $\frac{\phi(p_0)}{p_0}$. Further, conditional on $d_{0,0}=d$ the optimal power allocation strategy maximizing the cellular link coverage probability is on-off power control.
\label{theorem2}
\end{theorem}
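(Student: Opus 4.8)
The plan is to recast the infinite-dimensional problem \eqref{eq:optimization_cellular} as a linear program over distributions and exploit the fact that both the objective $\int \phi(p_0)\,\d F_{p_0}(p_0)$ and the constraints are \emph{linear} in $F_{p_0}$. The engine of the whole argument is the auxiliary ratio $g(p_0) = \phi(p_0)/p_0$, which is exactly the quantity the theorem singles out.

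First I would settle the existence claim. Since $\phi(p_0) = e^{-a_1 d^{\alpha} p_0^{-1} - a_2 d^2 p_0^{-2/\alpha}}$ with $a_1,a_2,d>0$, the factor $e^{-a_1 d^{\alpha}/p_0}$ forces $g(p_0)\to 0$ as $p_0\to 0^{+}$ (exponential decay beats the $1/p_0$ blow-up), while $g(p_0)\to 0$ as $p_0\to\infty$ because $\phi(p_0)\to 1$ and $1/p_0\to 0$. As $g$ is continuous and strictly positive on $(0,\infty)$, it attains its maximum at some interior $p^{\star}(d)\in(0,\infty)$: this is the asserted non-degenerate maximizer. I would also record uniqueness, which is needed later: setting $(\log g)'(p_0)=0$ gives $a_1 d^{\alpha}p_0^{-1} + \tfrac{2}{\alpha}a_2 d^2 p_0^{-2/\alpha} = 1$, whose left-hand side is strictly decreasing from $+\infty$ to $0$, so the stationary point, and hence the maximizer, is unique and $g$ is unimodal.

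The core reduction then proves optimality of on-off control. For any feasible $F_{p_0}$ I would split the objective at the possible atom at the origin. Because $\phi(0^{+})=0$, any mass at $p_0=0$ contributes nothing to $\int\phi\,\d F_{p_0}$ and nothing to the mean. On $(0,P_{\rm max, c}]$ I write $\phi(p_0)=g(p_0)\,p_0\le g(p^{\star})\,p_0$ and integrate, obtaining the universal bound
\[
\int \phi(p_0)\,\d F_{p_0}(p_0) \;\le\; g(p^{\star})\int p_0\,\d F_{p_0}(p_0) \;=\; g(p^{\star})\,P_{\textrm{avg, c}},
\]
where the final equality is the average-power constraint. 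This decouples the shape of the distribution from its mean, and equality forces $g(p_0)=g(p^{\star})$ on the support; by the uniqueness above the support inside $(0,\infty)$ collapses to $\{p^{\star}\}$. Hence the optimizer places mass only at $0$ and $p^{\star}$, with on-probability $P_s = P_{\textrm{avg, c}}/p^{\star}$ pinned by the mean constraint, which is precisely on-off power control.

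The main obstacle is the peak constraint $p_0\le P_{\rm max, c}$, which can render the two-point law $\{0,p^{\star}\}$ infeasible; I would dispose of it by cases using unimodality of $g$. If $P_{\textrm{avg, c}}\le p^{\star}\le P_{\rm max, c}$, the construction is feasible ($P_s\le 1$) and attains the bound. If instead $p^{\star}>P_{\rm max, c}$, then $g$ is strictly increasing on $(0,P_{\rm max, c}]$, so the admissible supremum of $g$ is $g(P_{\rm max, c})$; rerunning the bound with $P_{\rm max, c}$ in place of $p^{\star}$ shows the optimum is on-off supported on $\{0,P_{\rm max, c}\}$. Either way the optimal law is two-level on-off, giving the second assertion. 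The one genuinely delicate point to verify carefully is the interaction of the atom-at-zero bookkeeping with the \emph{equality} mean constraint, together with the feasibility window $P_{\textrm{avg, c}}\le\min\{p^{\star},P_{\rm max, c}\}$ under which the clean two-point optimizer exists.
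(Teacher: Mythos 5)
Your proof is correct, and its engine is the same as the paper's: both arguments hinge on the ratio $g(p)=\phi(p)/p$ and the observation that, under a single linear (mean-power) constraint, all useful probability mass should sit at the maximizer of $g$. The difference is in execution. The paper first \emph{relaxes} the problem --- it drops both the normalization $\int \d F_{p_0}=1$ and the peak constraint $p_0\le P_{\rm max,c}$ --- and then applies the change of measure $\d L(p)=\frac{p}{P_{\rm avg,c}}\,\d F_{p_0}(p)$, turning the objective into $P_{\rm avg,c}\int g(p)\,\d L(p)$ over probability measures $L$, which is maximized by a point mass at $p^\star(d)$; feasibility issues (the peak cap and the regime $p^\star(d)<P_{\rm avg,c}$) are silently deferred to Corollary 3, whose clipping formula $\max\{\min\{\tilde p_0,P_{\rm max,c}\},P_{\rm avg,c}\}$ resolves them. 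Your pointwise bound $\phi(p)\le g(p^\star)\,p$ integrated against $F_{p_0}$ is the same inequality with the change of measure unrolled, so the core is identical, but you add two things the paper's proof does not contain: (i) uniqueness of $p^\star(d)$ and unimodality of $g$ via the monotone first-order condition $a_1 d^{\alpha}p^{-1}+\tfrac{2}{\alpha}a_2 d^2 p^{-2/\alpha}=1$, which you correctly note is needed for the ``equality forces a two-point support'' step, and (ii) an explicit case analysis of the peak constraint, showing that when $p^\star(d)>P_{\rm max,c}$ the unimodality of $g$ pushes the on-level to $P_{\rm max,c}$. On the corner you flag as delicate: when $p^\star(d)<P_{\rm avg,c}$ the two-point law with atom at $p^\star(d)$ indeed cannot meet the equality mean constraint, and there the optimum degenerates to constant power $P_{\rm avg,c}$ (on-off with on-probability one) --- this regime is not handled by the paper's relaxation either, so your proof is at least as complete as the published one.
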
\vspace{0.2cm}
\begin{proof}
See Appendix \ref{theorem2:proof}.
\end{proof}
\vspace{0.2cm}

From Theorem \ref{theorem2}, the on-off power control strategy provides the cellular user with the optimal coverage probability performance and the optimal transmission power $p_0^\star(d)$ is a maximizer of the function $\frac{\phi(p_0)}{p_0}$. Although the exact expression of $p_0^\star(d)$ is difficult to obtain, we are able to find a closed form expression in the interference limited regime, i.e., $\sigma^2=0$.

\vspace{0.2 cm}
\begin{corollary}
For the interference limited regime ($\sigma^2=0$), the optimal transmit power of the cellular user placed at the distance $d_0=d$ with respect to the BS is
\begin{align}
p_0^\star(d) = \max\left\{ \min \left\{\tilde{p}_0,~ P_{\rm max, c}\right\},  P_{\textrm{avg, c}}\right\}. \label{eq:optimalpower}
\end{align}
\label{cor:cov}
where $\tilde{p}_0(d)=\left(\frac{2 \b E [ p_{k}^{ \frac{2}{\alpha} }] }{\alpha\textrm{sinc} (\frac{2}{\alpha})}\right)^{\!\!\frac{\alpha}{2}}\!\! \b{E}[K]^{\alpha/2} \beta_0 \left(\frac{d}{R}\right)^{\alpha}$.
\end{corollary}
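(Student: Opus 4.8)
The plan is to build directly on Theorem~\ref{theorem2}, which already establishes that conditional on $d_{0,0}=d$ the optimal policy is on-off and that its ``on'' level is a maximizer of $\phi(p_0)/p_0$. The corollary is then just the explicit evaluation of that maximizer in the interference-limited regime, followed by a projection onto the range of power levels permitted by the average- and peak-power constraints. So there are two things to do: compute the unconstrained maximizer of $\phi(p_0)/p_0$ when $\sigma^2=0$, and then show that enforcing the constraints clamps it to the stated interval.

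First I would specialize the objective. Setting $\sigma^2=0$ makes $a_1=0$, so $\phi(p_0)=\exp(-a_2 d^2 p_0^{-2/\alpha})$ with $a_2=\frac{\pi\lambda\beta_0^{2/\alpha}}{\sinc(2/\alpha)}\mathbb{E}[p_k^{2/\alpha}]$. Maximizing $\phi(p_0)/p_0$ is equivalent to maximizing its logarithm $g(p_0)=-a_2 d^2 p_0^{-2/\alpha}-\log p_0$. Differentiating gives $g'(p_0)=p_0^{-1}\bigl(\tfrac{2a_2 d^2}{\alpha}p_0^{-2/\alpha}-1\bigr)$, which has the unique zero $\tilde{p}_0=(2a_2 d^2/\alpha)^{\alpha/2}$, and $g'$ changes sign from positive to negative there, so $\phi(p_0)/p_0$ is unimodal with its single peak at $\tilde{p}_0$. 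Substituting the expression for $a_2$ and using $\pi\lambda=\mathbb{E}[K]/R^2$, together with $(\beta_0^{2/\alpha})^{\alpha/2}=\beta_0$ and $(\pi\lambda d^2)^{\alpha/2}=\mathbb{E}[K]^{\alpha/2}(d/R)^\alpha$, collapses $\tilde{p}_0$ to exactly the form stated in the corollary. This unimodality is the property I will use in the constrained step.

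Next I would reduce the constrained problem to a one-dimensional search over the ``on'' level. Writing the on-off policy as ``transmit $p$ with probability $q$, transmit $0$ with probability $1-q$'' and using $\phi(0)=0$ (since $p_0^{-2/\alpha}\to\infty$ as $p_0\to0$ when $\sigma^2=0$), the average coverage is $q\,\phi(p)$; the average-power equality $qp=P_{\rm avg, c}$ then fixes $q=P_{\rm avg, c}/p$, so the objective becomes $P_{\rm avg, c}\,\phi(p)/p$. The probability budget $q\le1$ translates into the lower bound $p\ge P_{\rm avg, c}$, while the peak constraint gives $p\le P_{\rm max, c}$. Maximizing the unimodal ratio $\phi(p)/p$ over $[P_{\rm avg, c},P_{\rm max, c}]$ returns $\tilde{p}_0$ if it lies inside, the left endpoint $P_{\rm avg, c}$ if $\tilde{p}_0$ falls below, and the right endpoint $P_{\rm max, c}$ if $\tilde{p}_0$ exceeds it; this is precisely $\max\{\min\{\tilde{p}_0,P_{\rm max, c}\},P_{\rm avg, c}\}$ under the natural ordering $P_{\rm avg, c}\le P_{\rm max, c}$.

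I expect the main obstacle to be the constraint-translation step rather than the calculus. One has to argue cleanly that the effective lower endpoint for the ``on'' level is $P_{\rm avg, c}$---this comes from the probability budget $q\le1$ rather than from any explicit lower power bound in (\ref{eq:optimization_cellular})---and that restricting attention to a single positive power level loses nothing, which is exactly what the on-off optimality in Theorem~\ref{theorem2} already guarantees. Once that reduction is in place, the unimodality of $\phi(p)/p$ makes the clamping argument immediate.
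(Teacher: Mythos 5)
Your proposal is correct, and its core computation coincides with the paper's: the paper also obtains $\tilde{p}_0$ from the first-order condition for maximizing $\phi(p_0)/p_0$ in the interference-limited regime (working with the substitution $x=1/p_0$ rather than with the logarithm, an immaterial difference), substitutes $a_2=\frac{\mathbb{E}[K]\beta_0^{2/\alpha}}{R^2 {\rm sinc}(2/\alpha)}\mathbb{E}[p_k^{2/\alpha}]$, and then clamps the result to the admissible power range. Where you genuinely go beyond the paper is the constraint-translation step. The paper's proof merely asserts ``take the minimum with $P_{\rm max, c}$'' and then checks a posteriori that the transmit probability $P_{\rm avg, c}/p_0^\star(d)\leq 1$; it never argues why clamping at either endpoint is optimal, and in particular the outer $\max\{\cdot,\,P_{\rm avg, c}\}$ is not justified at all. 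You, by contrast, reduce the constrained problem explicitly: using $\phi(0)=0$ and the average-power equality $qp=P_{\rm avg, c}$, the objective becomes $P_{\rm avg, c}\,\phi(p)/p$ over the interval $[P_{\rm avg, c}, P_{\rm max, c}]$ (lower endpoint from the probability budget $q\leq 1$, upper from the peak constraint), and the sign change of the log-derivative of $\phi(p)/p$ at $\tilde{p}_0$ shows the ratio is unimodal, so the constrained maximizer is exactly the clamp $\max\{\min\{\tilde{p}_0,P_{\rm max, c}\},P_{\rm avg, c}\}$. This unimodality argument is the missing ingredient that makes the paper's terse clamping step rigorous; your version is the one I would keep.
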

\vspace{0.2 cm}

\begin{proof}
Let $x=\frac{1}{p_0}$. Then, for the interference limited regime, the objective function for the uplink power optimization problem becomes $\phi (x)x = x\exp\left( - a_2 d^2 x^{2/\alpha} \right) $.
From the first order optimality condition, i.e., $\frac{\partial \phi (x)x}{\partial x}=0$, we obtain the maximizer $x^*=(\frac{\alpha}{2a_2d^2})^{\frac{\alpha}{2}}$. Putting $a_2=\frac{\pi \lambda \beta_0^{ \frac{2}{\alpha} } }{\textrm{sinc} (\frac{2}{\alpha})} \b E [ p_{k}^{ \frac{2}{\alpha} }] = \frac{\b{E}[K]\beta_0^{ \frac{2}{\alpha} } }{R^2\textrm{sinc} (\frac{2}{\alpha})}   \b E [ p_{k}^{ \frac{2}{\alpha} }]$ and using $p_0^*=\frac{1}{x^*}$, we obtain the $\tilde{p}_0(d)$ in (\ref{eq:optimalpower}). Since the transmit power should satisfy the maximum transmit power constraint, we take the minimun value between $\tilde{p}_0(d)$ and $P_{\rm max, c}$.
Note that since the cellular user uses binary power control, the cellular user's transmit probability becomes $\frac{P_{\textrm{avg, c}} }{ p_0^{\star}(d) }\leq 1$ .
\end{proof}
\vspace{0.2 cm}

Using the optimal cellular user transmit power obtained in Corollary \ref{cor:cov}, we have a closed form expression on the cellular user coverage probability for the interference-limited regime.
\vspace{0.2 cm}
\begin{corollary}
For the interference limited regime ($\sigma^2=0$) and a given uplink user distance $d$, the cellular user coverage probability becomes as in (19).
\begin{figure*}
\begin{eqnarray}
\bar{P}^{(C)}_{\textrm{cov}}(d;\beta_0) =
 \left\{
\begin{array}{l l}
\frac{P_{\textrm{avg, c}}}{P_{\textrm{max, c}} }  \exp\!\left\{\!- \! \frac{\b{E}[K]\beta_0^{ \frac{2}{\alpha} } }{\textrm{sinc} (\frac{2}{\alpha})}   \b E [ p_{k}^{ \frac{2}{\alpha} }]\left(\frac{d}{R}\right)^2 P_{\rm max, c}^{-2/\alpha}\right\}, & \quad \textrm{for} \quad   \tilde{p}_0 (d) \geq P_{\textrm{max, c}}, \\
 \frac{P_{\textrm{avg, c}} \exp\left(-\frac{2}{\alpha}\right)}{\left(\frac{2 \b E [ p_{k}^{ \frac{2}{\alpha} }] }{\alpha\textrm{sinc} (\frac{2}{\alpha})}    \right)^{\!\!\!\alpha/2} \!\!\!\b{E}[K]^{\alpha/2} \beta_0 \left(\frac{d}{R}\right)^{\!\alpha}}, & \quad \textrm{for} \quad  P_{\textrm{avg, c}} < \tilde{p}_0 (d) < P_{\textrm{max, c}}, \\
 \exp\left\{- \frac{\b{E}[K]\beta_0^{ \frac{2}{\alpha} } }{\textrm{sinc} (\frac{2}{\alpha})}   \b E [ p_{k}^{ \frac{2}{\alpha} }]\left(\frac{d}{R}\right)^2 P_{\textrm{avg, c}}^{-2/\alpha} \right\} & \quad \textrm{for} \quad \tilde{p}_0(d) \leq P_{\textrm{avg, c}}.
\end{array} \right.\\ \hline \nonumber \label{eq:CelCov}
\end{eqnarray}
\end{figure*}
\end{corollary}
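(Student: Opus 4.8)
The plan is to specialize Theorem~\ref{Theorem1} to the interference-limited case and then feed in the optimal on-off structure coming from Theorem~\ref{theorem2} and Corollary~\ref{cor:cov}. Setting $\sigma^2=0$ removes the $a_1$ term, so after conditioning on $d_{0,0}=d$ the conditional coverage is the one-dimensional function $\phi(p_0)=\exp\{-a_2 d^2 p_0^{-2/\alpha}\}$ already used in the optimization~(\ref{eq:optimization_cellular}). First I would rewrite $a_2 d^2$ in the normalized form $\frac{\b E[K]\,\beta_0^{2/\alpha}}{\textrm{sinc}(2/\alpha)}\,\b E[p_k^{2/\alpha}]\,(d/R)^2$ using $\pi\lambda R^2=\b E[K]$, since this is the shape in which $a_2 d^2$ appears inside every branch of the claimed formula (19).

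The central step is to translate ``on-off power control'' into a coverage expression. Under the optimal policy of Corollary~\ref{cor:cov} the cellular user transmits at power $p_0^\star(d)$ with probability $P_{\rm avg,c}/p_0^\star(d)$ and is silent otherwise; in the silent state $\sinr_0=0<\beta_0$, so that state contributes nothing to coverage. Hence
\begin{equation}
\bar{P}^{(C)}_{\textrm{cov}}(d;\beta_0)=\frac{P_{\rm avg,c}}{p_0^\star(d)}\,\phi\!\left(p_0^\star(d)\right),
\end{equation}
and the whole corollary reduces to substituting the three branches of the clamp $p_0^\star(d)=\max\{\min\{\tilde p_0,P_{\rm max,c}\},P_{\rm avg,c}\}$ taken from~(\ref{eq:optimalpower}).

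Each branch is then a short computation. When $\tilde p_0(d)\ge P_{\rm max,c}$ the clamp returns $p_0^\star=P_{\rm max,c}$, giving the prefactor $P_{\rm avg,c}/P_{\rm max,c}$ and the factor $\phi(P_{\rm max,c})$, i.e. the exponential carrying $P_{\rm max,c}^{-2/\alpha}$ (first branch). When $\tilde p_0(d)\le P_{\rm avg,c}$ the clamp returns $p_0^\star=P_{\rm avg,c}$, the prefactor collapses to $1$, and $\phi(P_{\rm avg,c})$ carries $P_{\rm avg,c}^{-2/\alpha}$ (third branch). The middle regime $P_{\rm avg,c}<\tilde p_0(d)<P_{\rm max,c}$ is the only delicate one: here $p_0^\star=\tilde p_0$ is exactly the unconstrained maximizer, so I would invoke the first-order optimality identity established in the proof of Corollary~\ref{cor:cov}, namely $\tfrac{2}{\alpha}a_2 d^2\,\tilde p_0^{-2/\alpha}=1$, to collapse $\phi(\tilde p_0)=\exp\{-a_2 d^2 \tilde p_0^{-2/\alpha}\}$ to the distance-independent constant $e^{-\alpha/2}$; the surviving factor $P_{\rm avg,c}/\tilde p_0(d)$ then produces the closed-form denominator equal to $\tilde p_0(d)$.

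The analytic content is light once the factorized expression is in hand, so the main obstacle is organizational rather than deep. I would have to verify that the $\max/\min$ clamp really selects the claimed value on each of the three intervals, which tacitly uses $P_{\rm avg,c}\le P_{\rm max,c}$ so that the inner and outer clamps never conflict, and that the two boundary cases $\tilde p_0=P_{\rm max,c}$ and $\tilde p_0=P_{\rm avg,c}$ are assigned consistently between adjacent branches. Carrying the normalized form of $a_2 d^2$ through all three pieces without dropping the $(d/R)^2$ factor is the remaining spot where an algebraic slip could occur.
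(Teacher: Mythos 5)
Your strategy coincides with the paper's own proof: in the interference-limited regime the conditional coverage collapses to $\phi(p_0)=\exp\{-a_2 d^2 p_0^{-2/\alpha}\}$, the on-off structure from Corollary~\ref{cor:cov} gives $\bar{P}^{(C)}_{\textrm{cov}}(d;\beta_0)=\frac{P_{\textrm{avg, c}}}{p_0^\star(d)}\,\phi\!\left(p_0^\star(d)\right)$ (the silent state contributing nothing), and the three branches follow by substituting the clamp in (\ref{eq:optimalpower}); your treatment of the outer branches and of the tacit assumption $P_{\textrm{avg, c}}\le P_{\textrm{max, c}}$ is fine. The paper's proof stops at the factorized expression and asserts the result; you actually carry out the middle-branch algebra, and there your (correct) computation does \emph{not} reproduce the printed formula. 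From the first-order identity $\frac{2}{\alpha}a_2 d^2\,\tilde p_0^{-2/\alpha}=1$ one gets $a_2 d^2\,\tilde p_0^{-2/\alpha}=\alpha/2$, hence $\phi(\tilde p_0)=e^{-\alpha/2}$, whereas the statement's middle branch carries $e^{-2/\alpha}$; these differ for every $\alpha\neq 2$. Your constant is the right one, as a continuity check confirms: at the boundary $\tilde p_0(d)=P_{\textrm{avg, c}}$ the third branch equals $\exp\{-a_2 d^2 P_{\textrm{avg, c}}^{-2/\alpha}\}=e^{-\alpha/2}$, and at $\tilde p_0(d)=P_{\textrm{max, c}}$ the first branch equals $\frac{P_{\textrm{avg, c}}}{P_{\textrm{max, c}}}e^{-\alpha/2}$, both of which match the middle branch only if its constant is $e^{-\alpha/2}$. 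So the displayed equation (19) contains a typo (the exponent $\alpha/2$ written as its reciprocal), which also propagates into the numerical values of Example 2. The one genuine defect in your write-up is that you present the derivation as establishing the statement verbatim while silently producing a different constant; you should state explicitly that the computation yields $e^{-\alpha/2}$ and that the printed branch must be amended accordingly.
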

\begin{proof}
Recall that under the assumptions of interference limited regime and the fixed distance of uplink user, the cellular user coverage probability becomes $\bar{P}^{(C)}_{\textrm{cov}}(d;\beta_0) = \b E_{p_0}\!\left[e^{ - a_2 d^2 p_0^{- \frac{2}{\alpha}} } \right]$. Since the optimal power control strategy of the uplink user is the binary power control, i.e., $p_0=p_0^\star(d) $ with probability $\frac{P_{\textrm{avg, c}} }{ p_0^{\star}(d) }$ and $p_0=0 $ with probability $1-\frac{P_{\textrm{avg, c}} }{ p_0^{\star}(d) }$, the coverage probability expression is reduced as
\begin{align}
\bar{P}^{(C)}_{\textrm{cov}}(d;\beta_0) &=  \frac{P_{\textrm{avg, c}}}{p_0^\star(d) }  \exp\left( - a_2 d^2 p_0^\star(d)^{-2/\alpha}\right).
\end{align}
Using the solution of $p_0^{\star}(d)$ in (\ref{eq:optimalpower}) and $a_2 = \frac{\b{E}[K]\beta_0^{ \frac{2}{\alpha} } }{R^2\textrm{sinc} (\frac{2}{\alpha})}   \b E [ p_{k}^{ \frac{2}{\alpha} }]$, we obtain the desired coverage probability expression.
\end{proof}
\vspace{0.2cm}

The coverage probability of the cellular user $\bar{P}^{(C)}_{\textrm{cov}}(d;\beta_0) $ behaves in three different ways according to the location of the user $d$. When $ \tilde{p}_0(d)  < P_{\textrm{avg, c}}$ (i.e., the user is located at around the cell center), the cellular user uses the constant transmit power $P_{\textrm{avg, c}}$; this results in the coverage probability decreases exponentially with respective to $\beta_0^{\frac{2}{\alpha}}$. When $ P_{\textrm{avg, c}} < \tilde{p}_0 (d) < P_{\textrm{max, c}}$ (i.e., the user is located at mid range of the cell edge), the on-off power control strategy is activated. In this regime, the cellular user increases its transmit power proportionally to $d^{\alpha}$, implying that the cellular user should increase the transmit power according to the inverse of path-loss, agreeing with intuition. Further, the uplink user is required to increase the transmit power linearly according to $\b{E}[K]^{\frac{\alpha}{2}}$ where $K$ is the random number of D2D links in the coverage of the BS. From this on-off power control strategy, the coverage probability decreases linearly with respect to the target SIR $\beta_0$.
In the regime of $\tilde{p}_0(d) \geq P_{\textrm{max, c}}$ (i.e., the user is located at around the cell edge), the cellular user sends a signal with its maximum transmit power with probability $\frac{P_{\rm avg, c}}{P_{\rm max, c}}$ due to the peak power constraint; thus, the coverage probability decreases linearly with respective to $\beta_0^{\frac{2}{\alpha}}$ again.

We provide an example to help the understanding of three different behaviors on  the coverage probability performance.

\textbf{Example 2 (Three Different Behaviors of the Cellular Link Coverage Probability)}: In this example, let us consider a set of typical parameters: the path-loss exponent $\alpha=4$, the cell radius $R=500 m$, the target SINR $\beta_0=6$ dB, the average number of D2D links $\b{E}[K]=\lambda \pi R^2=39$, and the average and maximum transmit power constraints of the cellular user $P_{\textrm{avg, c}}=0.1$ W and $P_{\textrm{max, c}}=0.2$ W. Further, we assume that the D2D links use a constant transmit power $P_{\rm max, d}=0.0001$ W, which gives us $\b{E}[\sqrt{p_k}]= \sqrt{P_{\textrm{max, d}}}=\sqrt{0.0001}$. In this set of parameters, the transmit power of the cellular link is expressed in terms of the distance $d$ as
\begin{align}
\tilde{p}_0(d)&=\left(\frac{ \b E [ p_{k}^{ \frac{1}{2} }] }{2\textrm{sinc} (\frac{1}{2})}\right)^{\!2}\!\! \b{E}[K]^{2} \beta_0 \left(\frac{d}{R}\right)^{2} \nonumber \\
&\simeq 0.375 \times \left(\frac{d}{R}\right)^{2} \textrm{W}.
\end{align}
If the cellular user is located in the half of cell radius $d=\frac{R}{2}$, the optimal uplink transmission power $p_0^\star(R/2)=\max\left\{ \min\left\{ \tilde{p}_0, P_{\textrm{max, c}} \right\}, P_{\textrm{avg, c}} \right\}=0.1$ W because of $\tilde{p}_0(R/2)\simeq 0.093 < 0.1$ W, implying that the average transmit power is used in this regime. Thus, the cellular user coverage probability becomes $\bar{P}^{(C)}_{\textrm{cov}}(R/2, \beta_0)\simeq 0.743$. Alternatively, if we consider that the cellular user is located at mid range of the cell edge with $d=0.7R$, then the transmission power of the cellular user becomes $p_0^\star(0.7R)=0.183$ W, which means that the uplink user opportunistically sends its uplink signal using transmit power 0.183 W with probability of $\frac{100}{183}$. Therefore, it gives a coverage probability performance $\bar{P}^{(C)}_{\textrm{cov}}(0.7R, \beta_0) \simeq 0.3298$. For the regime of $d=R$, the optimal transmit power of the cellular link $p_0^\star(R)=0.2$ W (the maximum transmit power); thus, the coverage probability performance in this regime is $\bar{P}^{(C)}_{\textrm{cov}}(R, \beta_0) \simeq 0.216$.

\subsection{D2D Link Coverage Probability}

We derive an expression for the coverage probability for the typical D2D link. Consider an arbitrary communication D2D pair $k$ and assume that the D2D receiver is located at the origin. Then,
\begin{align}
\sinr_k =  \frac{ p_{k} |h_{k,k}|^2  d_{k,k}^{-\alpha} }{ \sum_{x \in \Phi \setminus \{k\}  } p_{i} |h_{k,i}|^2  ||x_i||^{-\alpha} + p_0 |h_{k,0}|^2 d_{k,0}^{-\alpha} + \sigma^2},
\end{align}
where $||x_i||=d_{k,i}$. Using the same approach we used to prove Theorem 1, we need to compute two Laplace transforms $ \b E\left[ e^{-s p_0 |h_{k,0}|^2 d_{k,0}^{-\alpha} } \right] $ and  $ \b E\left[ e^{-s \sum_{x \in \Phi \setminus \{k\}  } p_{i} |h_{k,i}|^2  ||x_i||^{-\alpha}  } \right] $ to derive the distribution of $\sinr_k$.

First let us  focus on  $ \b E\left[ e^{-s p_0 |h_{k,0}|^2 d_{k,0}^{-\alpha} } \right] $.
As we assume the uplink user and the D2D receiver are randomly positioned in the disk with radius $R$, the pdf $f_{d_{k,0}} (r)$ is given by \cite{moltchanov2012distance}
\begin{align}
f_{d_{k,0}} (r) \!\!=\!\! \frac{2r}{R^2} \!\left(\!\!  \frac{2}{\pi}\! \cos^{-1}\!\! \left( \frac{r}{2R} \right) \!-\! \frac{r}{\pi R} \sqrt{ 1 \!-\! \frac{r^2}{4R^2} } \!\right), \!\!\quad 0\!\leq r \!\leq 2R. \label{eq:distance_interference}
\end{align}
Besides, $|h_{k,0}|^2$ is a random variable with the exponential distribution, i.e. $|h_{k,0}|^2~\textrm{Exp} (1)$ and $p_0$ has cdf $F_{p_0} (p)$.  Noting further that $p_0,  |h_{k,0}|^2$, and $d_{k,0}$ are independent, we have
\begin{align}
 \b E\!\left[ e^{-s p_0 |h_{k,0}|^2 d_{k,0}^{-\alpha} }\!\right] \!=\!\!\! \int \!\!\!\!\int_0^\infty \!\!\!\!\int_{0}^{2R}\!\!\!\! e^{-s p h r^{-\alpha} \!-\! h }    f_{d_{k,0}} (r) \d r    \d h \d F_{p_0} (p). \label{eq:D2D_cov}
\end{align}

With a similar approach as in the previous subsection, it is possible to derive the complementary cumulative distribution function (ccdf) of $\sinr_k$, and the coverage probability for the typical D2D link is given in the following theorem.
\vspace{0.2 cm}
\begin{theorem}\label{Th3}
The coverage probability of the typical D2D link is given by
\begin{align}
\bar{P}^{(D)}_{\textrm{cov}}(\beta)   = \b E_{Z} \left[ e^{  -b_1 Z - b_2 Z^{\frac{2}{\alpha}}  } \hat{\c L}_{Y}(\beta Z) ]  \right], \label{eq:d2d_CP}
\end{align}
where $b_1 = \sigma^2\beta$, $b_2 =  \frac{\pi \lambda \beta^{ \frac{2}{\alpha} } }{\textrm{sinc} (\frac{2}{\alpha})}   \b E [ p_k^{ \frac{2}{\alpha} }]$,  $Z = p_k^{-1} d_{k,k}^{\alpha}$ with cdf
$
F_Z (z) = \int  F_{ d_{k,k} } ( x^{\frac{1}{\alpha}} p^{\frac{1}{\alpha}} ) \d F_{p_k} (p)
$, $Y = p_0 |h_{k,0}|^2 d_{k,0}^{-\alpha}$ and $\c L_{Y}(s) =  \b E\left[ e^{-s p_0 |h_{k,0}|^2 d_{k,0}^{-\alpha} } \right]$.
\label{Theorem2}
\end{theorem}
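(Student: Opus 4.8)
The plan is to mirror the derivation of Theorem~\ref{Theorem1}, the only new ingredient being the cross-tier interference from the uplink user, which I isolate as an independent Laplace-transform factor. First I place the typical D2D receiver at the origin. Because $\Phi$ is a homogeneous PPP and I condition on the tagged pair $k$, Slivnyak's theorem guarantees that the interfering D2D transmitters $\Phi \setminus \{k\}$ still form a homogeneous PPP of density $\lambda$, so tagging the link leaves the inter-D2D interference statistics unchanged.

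Next I condition on $Z = p_k^{-1} d_{k,k}^{\alpha}$, which encodes the signal-link geometry and power. Writing the aggregate interference-plus-noise as $W = I_{\Phi} + Y + \sigma^2$ with $I_{\Phi} = \sum_{x \in \Phi \setminus \{k\}} p_i |h_{k,i}|^2 \|x_i\|^{-\alpha}$ and $Y = p_0 |h_{k,0}|^2 d_{k,0}^{-\alpha}$, the coverage event becomes $\{|h_{k,k}|^2 \geq \beta Z W\}$. Since $|h_{k,k}|^2 \sim \textrm{Exp}(1)$ and is independent of $W$, conditioning on $W$ gives $\b P(\sinr_k \geq \beta \mid Z, W) = e^{-\beta Z W}$. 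The three constituents of $W$ --- the deterministic noise $\sigma^2$, the PPP interference $I_{\Phi}$, and the cross-tier term $Y$ --- are mutually independent, so the expectation over $W$ factorizes:
\begin{align}
\b P(\sinr_k \geq \beta \mid Z) = e^{-\sigma^2 \beta Z}\, \c L_{I_{\Phi}}(\beta Z)\, \c L_{Y}(\beta Z), \nonumber
\end{align}
which identifies $b_1 = \sigma^2 \beta$ and isolates the two Laplace-transform factors.

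The PPP factor $\c L_{I_{\Phi}}(\beta Z)$ is computed exactly as in Theorem~\ref{Theorem1}: the probability generating functional of $\Phi \setminus \{k\}$ with i.i.d. marks $p_i |h_{k,i}|^2$ reduces, after passing to polar coordinates and using $\b E_h[e^{-t h}] = (1+t)^{-1}$ for $h \sim \textrm{Exp}(1)$, to $\exp(-\pi \lambda (\beta Z)^{2/\alpha}\, \b E[p_k^{2/\alpha}]\, \Gamma(1-2/\alpha)\Gamma(1+2/\alpha))$; the reflection-formula identity $\Gamma(1-2/\alpha)\Gamma(1+2/\alpha) = 1/\sinc(2/\alpha)$ then yields $\c L_{I_{\Phi}}(\beta Z) = e^{-b_2 Z^{2/\alpha}}$ with $b_2$ as stated. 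The cross-tier factor $\c L_{Y}(\beta Z)$ is the quantity already written out in~(\ref{eq:D2D_cov}); because the distance $d_{k,0}$ between the two independently uniform points in the disk has the density $f_{d_{k,0}}$ in~(\ref{eq:distance_interference}), which admits no clean closed form, this factor is left in Laplace-transform form. Averaging the conditional expression over $Z$ gives the claimed formula.

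The routine parts --- the generating-functional computation and the $\sinc$ identity --- are inherited verbatim from Theorem~\ref{Theorem1}; the one point requiring care is the factorization of the expectation over $W$, which rests on the independence of $I_{\Phi}$ and $Y$, i.e.\ on the uplink user's location and fading being independent of the D2D point process and its marks. Granting that independence, the cross-tier interference simply contributes the extra multiplicative factor $\c L_{Y}(\beta Z)$, and the absence of a closed form for $f_{d_{k,0}}$ is precisely why the theorem is stated with $\c L_{Y}$ rather than an elementary function.
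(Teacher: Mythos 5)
Your proposal is correct and follows essentially the same route as the paper's own proof: Slivnyak's theorem to equate $\c L_{\Phi\setminus\{k\}}$ with $\c L_{\Phi}$, the exponential-fading ccdf trick to turn the coverage event into $e^{-\beta Z W}$, and factorization of the expectation over the independent noise, inter-D2D, and cross-tier terms, with the cross-tier factor left as $\c L_{Y}(\beta Z)$. The only difference is cosmetic --- you spell out the PGFL and $\sinc$-identity computation that the paper inherits by reference from Theorem~\ref{Theorem1} --- so no further comment is needed.
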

\vspace{0.2 cm}
\begin{proof}
See Appendix \ref{Lin:pro:10:proof}.
\end{proof}
\vspace{0.2 cm}


\begin{figure}
\centering
\includegraphics[width=9.5cm]{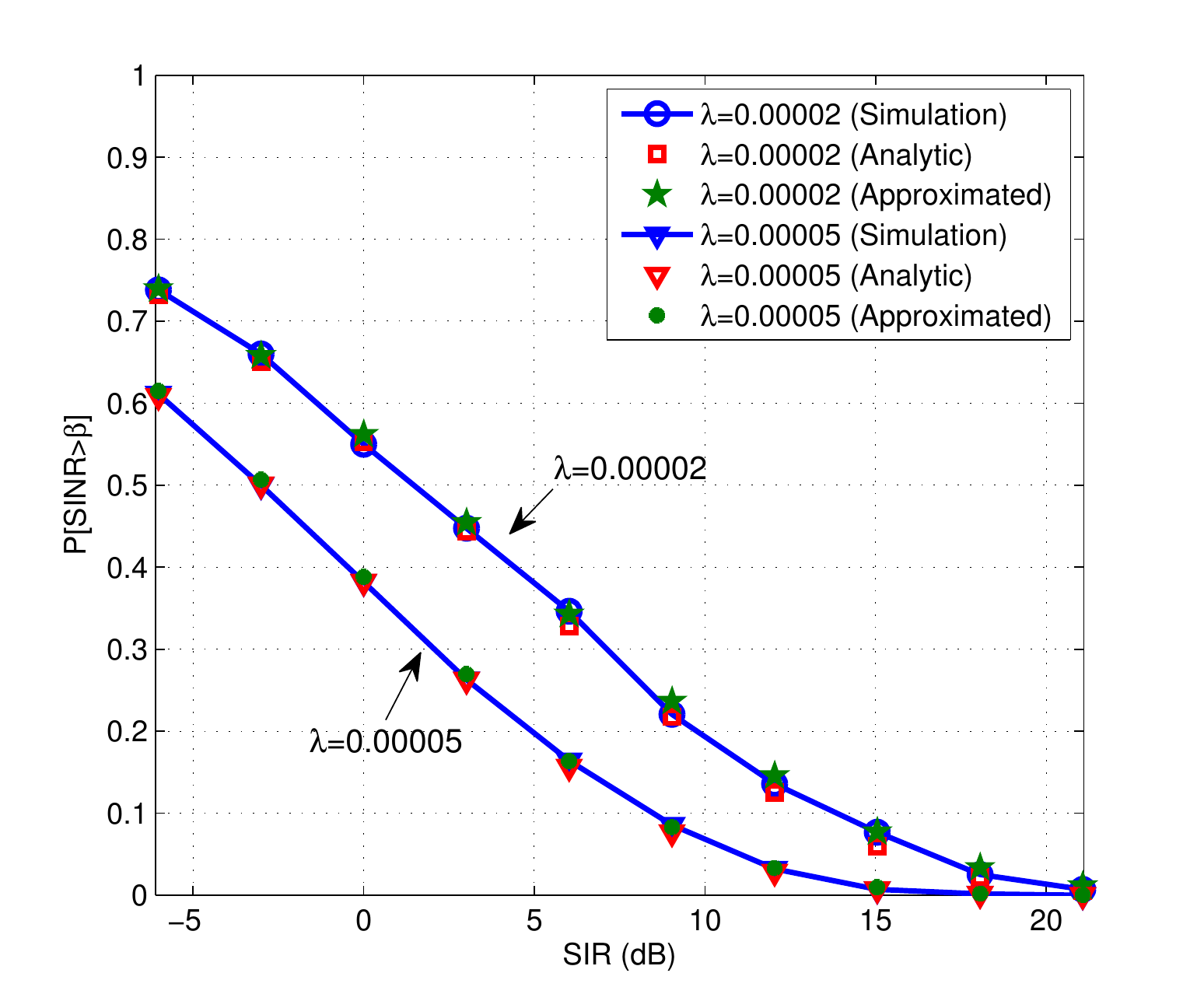}\vspace{-0.5cm}
\caption{Coverage probability performance of the typical D2D link with a set of parameters $d_{k,k}=50$ m, $p_0=100$ mW, $p_{k}=0.1$ mW, $R=500$ m, and $\lambda \in\{0.00002,0.00005\}$.}\vspace{-0.4cm}
\label{fig:D2D_cov}
\end{figure}

To shed further light on the significance of the expression derived in Theorem \ref{Th3}, it is instructive to consider a special case where all D2D transmitters communicate with their corresponding receivers with a fixed distance $d_{k,k}$ using a fixed transmit power $p_k$. When the BS uses also a constant transmit power $p_0$, we are able to derive a closed from expression of the coverage probability for the typical D2D link in the interference limited regime $\sigma^2=0$ as
\begin{align}
\bar{P}^{(D)}_{\textrm{cov}}(\beta)   &\!=\! \exp\!\left(\! -  \frac{\pi \lambda \beta^{ \frac{2}{\alpha} } }{\textrm{sinc} (\frac{2}{\alpha})} p_k^{-\frac{2}{\alpha}} p_k^{\frac{2}{\alpha}}  d_{k,k}^2\!\!\right) \!\!  \b E\!\!\left[ e^{-\beta \frac{p_0}{p_{k}} \!\left(\frac{d_{k,k}}{d_{k,0}}\right)^{\!\!\alpha} \!\! |h_{k,0}|^2  }\! \right] \nonumber \\
&= \exp\left( -  \frac{\pi \lambda \beta^{ \frac{2}{\alpha} } }{\textrm{sinc} (\frac{2}{\alpha})}   d_{k,k}^2\right)\b E\left[  \frac{1}{1+\beta \frac{p_0}{p_{k}} \left(\frac{d_{k,k}}{d_{k,0}}\right)^{\alpha} }\right], \label{eq:d2d_CP_simple}
\end{align}
where the second equality comes from the fact that $ |h_{k,0}|^2\sim\exp(1)$ and the expectation in (\ref{eq:d2d_CP_simple}) is over $d_{k,0}$. 

We further consider a simple but approximated expression of the coverage probability for the typical D2D link. With an approximation of $\b E\left[  \frac{1}{1+\frac{\kappa}{d_{k,0}^{\alpha}}}\right]\simeq \frac{1}{1+\frac{\kappa^{2/\alpha}}{\b E[d_{k,0}]^{2}}}$, which is obtained from numerical observations, we have an approximated expression on the coverage probability as
\begin{align}
\bar{P}^{(D)}_{\textrm{cov}} (\beta)
&\simeq \exp\left( \!\!-  \frac{\pi \lambda \beta^{ \frac{2}{\alpha} } }{\textrm{sinc} (\frac{2}{\alpha})}   d_{k,k}^2\!\!\right)\!\!\frac{1}{1\!+\!\left(\!\beta \frac{p_0}{p_{k}}\!\right)^{\!\!2/\alpha}\!\!\!\frac{d_{k,k}^{2}}{\b E[d_{k,0}]^{2}}}, \nonumber \\
&= \exp\left( -  \frac{\pi \lambda \beta^{ \frac{2}{\alpha} } }{\textrm{sinc} (\frac{2}{\alpha})}   d_{k,k}^2\right)\frac{1}{1+\left(\beta \frac{p_0}{p_{k}}\right)^{\!\!2/\alpha}\!\!\!\frac{d_{k,k}^{2}}{\b (128R/(45\pi))^{2}}},\label{eq:d2d_CP_approx}
\end{align}
where the equality follows from the first moment of $d_{k,0}$, $\b E[d_{k,0}]= \frac{128R}{45\pi}$ given in \cite{moltchanov2012distance}.


To validate our analysis, we compare the analytic expressions (exact and approximated) in (\ref{eq:d2d_CP_simple}) and (\ref{eq:d2d_CP_approx}) with simulation results. Fig. \ref{fig:D2D_cov} depicts the analytical expressions alongside the result of the corresponding Monte Carlo simulation for the entire range of $\beta$ and different $\lambda$. The agreement is accurate. Further, the approximated expression on the coverage probability in (\ref{eq:d2d_CP_approx}) provides a very precise  approximated performance, especially when $\alpha=4$  case.


%
%


\section{ Sum Rate Analysis of D2D Links}

In this section, we analyze the sum rate of D2D links when the proposed on-off power control is applied and characterize the optimal threshold of the on-off power control, which maximizes the sum-rate of D2D links.

\vspace{-0.3cm}
\subsection{Sum Rate of D2D Links}
Let us denote the normalized inter-D2D link interference power at the $k$-th D2D receiver as $I_k=\sum_{\ell\neq k} |h_{k,\ell}|^2 d_{k,\ell}^{-\alpha}$ for $k,\ell\in \{1,2,\ldots,|\mathcal{S}|\}$ where $|\mathcal{S}|$ denotes the number of active links selected by the proposed on-off power control algorithm, i.e., $|\mathcal{S}|=\lambda \b{P}[|h_{k,k}|^2d_{k,k}^{-\alpha}\geq G_{\textrm{min}}] \pi R^2= \tilde{\lambda}\pi R^2$. Further, let $\tilde{p}=\frac{P_{\textrm{max, c}}}{P_{\textrm{max, d}}}$ denote the transmission power ratio between the D2D transmitter and the uplink user. Assuming Gaussian signal transmission from all the active links, the distribution of the interference becomes Gaussian. Then, the achievable sum rate of D2D links is written as
\begin{align}
R^{(D)}&=  \b E\left[ \sum_{k=1}^{K}\log_2\!\left(\!\! 1\!+\! \frac{|h_{k,k}|^2d_{k,k}^{-\alpha}}{I_k + |h_{k,0}|^2d_{k,0}^{-\alpha}\tilde{p}}\! \right)\!\right] \nonumber\\
& =|\mathcal{S}| \b E\left[ \log_2\!\left(1+ \rm{SIR}_k\right) \right],\nonumber \\
& =\tilde{\lambda}\pi R^2 \times \bar{R}_{\rm d2d}.  \label{eq:d2dsumrate}
\end{align}

Using the SIR distribution of the typical D2D link given in (\ref{eq:d2d_CP_simple}), the ergodic rate of the typical D2D link can be rewritten as
\begin{align}
 \bar{R}_{\rm d2d}&= \int_{0}^{\infty} \log_2(1\!+\!x) \b P[ {\rm SIR}_k \geq x] \d x  \nonumber \\
&= \int_{0}^{\infty}\frac{P^{(D)}_{\rm cov}(x)}{1+x} \d x
\nonumber \\
&=\int_{0}^{\infty}\!\!\!\! \frac{1 }{1\!+\!x}\exp\!\left(\!\! -  \frac{\pi
\tilde{\lambda} x^{ \frac{2}{\alpha} }d_{k,k}^2 }{\textrm{sinc} (\frac{2}{\alpha})}   \!\!\right)\!\!\b E\!\!\left[ \!\! \frac{1}{1\!+\!x \tilde{p}\! \left(\!\frac{d_{k,k}}{d_{k,0}}\!\right)^{\!\!\alpha} }\!\!\right] \d x \label{eq:sum_rate}
\end{align}
where the expectation is taken over $d_{k,0}$. Using an approximation of $\b E\left[  \frac{1}{1+\frac{\kappa}{d_{k,0}^{\alpha}}}\right]\simeq \frac{1}{1+\frac{\kappa^{2/\alpha}}{\b E[d_{k,0}]^{2}}}$, we have an approximated expression of the ergodic rate of the typical D2D link in (\ref{eq:sum_rate}) as
\begin{align}
 \bar{R}_{\rm d2d}&\simeq \int_{0}^{\infty}\!\!\!\! \frac{1 }{1\!+\!x}\exp\!\left(\!\! -  \frac{\pi
\tilde{\lambda} x^{ \frac{2}{\alpha} }d_{k,k}^2 }{\textrm{sinc} (\frac{2}{\alpha})}   \!\!\right)\!\!  \frac{1}{1\!+\! \left(x \tilde{p}\right)^{\frac{2}{\alpha}} \!\!\left(\!\frac{d_{k,k}}{\b E [d_{k,0}]}\!\right)^{\!\!2} } \d x \\
&= \int_{0}^{\infty}\!\! \frac{1}{1\!+\!x}\exp\!\left(\!\! -  \frac{\pi
\tilde{\lambda} x^{ \frac{2}{\alpha} }d_{k,k}^2 }{\textrm{sinc} (\frac{2}{\alpha})}   \!\!\right)\!\!  \frac{1}{1\!+\kappa x^{\frac{2}{\alpha}}} \d x, \label{eq:D2Drates}
\end{align}
where $\kappa \!=\! \left(\frac{P_{\rm max, c}}{P_{\rm max, d}}\right)^{\frac{2}{\alpha}}\!\! \left(\!\frac{d_{k,k}}{128R/(45\pi)}\!\right)^{\!\!2}$. Interestingly, the approximated expression of the ergodic rate of the typical D2D link in (\ref{eq:D2Drates}) is determined by two factors: (1) the Laplace transform of the total interference power created by all active links on the entire network, i.e., $\exp\!\left(\!\! -  \frac{\pi
\tilde{\lambda} x^{ \frac{2}{\alpha} }d_{k,k}^2 }{\textrm{sinc} (\frac{2}{\alpha})}   \!\!\right)$ and (2) the approximated effect of the uplink interference $ \frac{1}{1\!+\!\kappa x^{\frac{2}{\alpha}} }$.

\subsection{Optimizing D2D ON-Off Threshold}

With the characterized ergodic sum rate of D2D links, we optimize the D2D on-off threshold $G_{\rm min}$ by maximizing the approximated transmission capacity of D2D links given as
\begin{align}
R^{(D)}(\beta)&=\tilde{\lambda} \pi R^2\log_2(1+\beta)\b P[{\rm SIR_k}\geq \beta]  \nonumber  \\
 &\simeq     \tilde{\lambda} \pi R^2\exp\!\left(\!\! -  \frac{\pi
\tilde{\lambda} \beta^{ \frac{2}{\alpha} }d_{k,k}^2 }{\textrm{sinc} (\frac{2}{\alpha})}   \!\!\right)\!\!  \frac{\log_2(1+\beta)}{1\!+\kappa \beta^{\frac{2}{\alpha}}}  \\
&=  {\lambda}P_s \pi R^2\exp\!\left(\!\! -  \frac{\pi
\lambda P_s \beta^{ \frac{2}{\alpha} }d_{k,k}^2 }{\textrm{sinc} (\frac{2}{\alpha})}   \!\!\right)\!\!\frac{\log_2(1+\beta)}{1\!+\kappa \beta^{\frac{2}{\alpha}}}. \label{eq:TC1}
\end{align}
To this end, we first compute the optimal transmission probability $P_s$ by solving the optimization problem:
\begin{align}
 \max~ &R^{(D)} (\beta)  \nonumber \\
 \textrm{subject to}~~  &0< P_s \leq 1 \label{eq:optimization}
\end{align}
Although the objective function is not concave, the optimal solution of $P_s$ can be obtained by using the first order optimality condition since the objective function has an unique optimum point. The first order optimality condition yields
\begin{align}
1-\frac{\pi
\lambda \beta^{ \frac{2}{\alpha} }d_{k,k}^2 }{\textrm{sinc} (\frac{2}{\alpha})}  P_s =0,
\end{align}
from which we have $P^{\star}_s=\min\left\{ \frac{\textrm{sinc} (\frac{2}{\alpha})}{\pi
\lambda \beta^{ \frac{2}{\alpha} }d_{k,k}^2 },1\right\}$. Finally, since $P_s =\b P[|h_{k,k}|^2d_{k,k}^{-\alpha}>G_{\rm min}]$, the optimal on-off threshold can be obtained as
\begin{align}
G_{\rm min}^{\star} = \frac{-\ln(P_s^{\star}) }{ d^{\alpha}_{k,k} }. \label{eq:Gmin}
\end{align}


Using the solution of $P_s^{\star}$, the approximated transmission capacity in (\ref{eq:TC1}) can be expressed as
\begin{align}
R^{(D)}(\beta) \simeq
 \left\{
\begin{array}{l l}
 {\lambda}  \pi R^2\exp\!\left(\!\! -  \frac{\pi
\lambda \beta^{ \frac{2}{\alpha} }d_{k,k}^2 }{\textrm{sinc} (\frac{2}{\alpha})}   \!\!\right)\!\!\frac{\log_2(1+\beta)}{1\!+\kappa \beta^{\frac{2}{\alpha}}}, &  \textrm{for} ~~  \beta< \tilde{\beta}, \\
 \frac{\textrm{sinc} \left(\frac{2}{\alpha}\right)}{\exp(1)} \!\!\left(\! \frac{R}{ d_{k,k} }\! \right)^{\!\!2}\!\! \beta^{ -\frac{2}{\alpha} }\frac{\log_2(1+\beta)}{1+\kappa \beta^{\frac{2}{\alpha}}}&  \textrm{for} ~~ \beta> \tilde{\beta}.
\end{array} \right. \label{eq:transmission_capacity}
\end{align}
where $\tilde{\beta}=\left[\frac{\textrm{sinc} \left(\frac{2}{\alpha}\right) }{\pi \lambda d_{k,k}^2}\right]^{\frac{\alpha}{2}}$.
The transmission capacity of the D2D links behaves differently depending on the relative relationship between the target SINR value $\beta$ and network parameters: path-loss exponent $\alpha$ and the density of D2D links $\lambda$, and the distance of D2D link $d_{k,k}$. In the case where $\beta$ is smaller than $\left[\frac{\textrm{sinc} \left(\frac{2}{\alpha}\right) }{\pi \lambda d_{k,k}^2}\right]^{\frac{\alpha}{2}}$, all D2D transmitters are scheduled, which leads to achieve the same performance with that of no power control. Meanwhile, when $\beta$ is large enough, the D2D links are scheduled with the transmission probability $P_s^{\star}$, which results in mitigating the inter-D2D interference. In particular, in the case of  $\beta > \left[\frac{\textrm{sinc} \left(\frac{2}{\alpha}\right) }{\pi \lambda d_{k,k}^2}\right]^{\frac{\alpha}{2}}$, the transmission capacity of the D2D links becomes independent of the density of nodes $\lambda$. Further, the transmission capacity of underlaid D2D links increases linearly with the spatial packing ratio $\frac{R^2}{d_{k,k}^{2}}$ of D2D transmissions.

\begin{table}[h]
\caption{The Sum Rate Performance of D2D Links  }\centerline{
    \begin{tabular}{c|c|c|c}
	\hline
	\textbf{Density of D2D links ($\lambda$)} & 0.00001 & 0.00003 & 0.00005 \\
	\hline
	Sum Rate (Simulation) & 13.98 & 27.83  & 33.93   \\ \hline
	Sum Rate (\ref{eq:sum_analy})  & 13.63  & 26.29 & 32.54 \\
	\hline
    \end{tabular}}\label{table2}\vspace{-0.4cm}
\end{table}

By integration of the transmission capacity in (\ref{eq:transmission_capacity}) with respect to $\beta$, we have the sum rate of D2D links as
\begin{align}
R^{(D)}&\simeq \int_{0}^{\tilde{\beta}}
 \frac{{\lambda}  \pi R^2}{(1\!+\kappa x^{\frac{2}{\alpha}})(1+x)}\exp\!\left(\!\! -  \frac{\pi
\lambda x^{ \frac{2}{\alpha} }d_{k,k}^2 }{\textrm{sinc} (\frac{2}{\alpha})}   \!\!\right)\!\!  \d x \nonumber \\
&+ \int_{\tilde{\beta}}^{\infty}
 \frac{x^{ -\frac{2}{\alpha} }}{(1\!+\kappa x^{\frac{2}{\alpha}})(1+x)} \frac{\textrm{sinc} \left(\frac{2}{\alpha}\right)}{\exp(1)} \!\!\left(\! \frac{R}{ d_{k,k} }\! \right)^{\!\!2}\!\!   \d x. \label{eq:sum_analy}
\end{align}
To validate our analysis, we compare the analytical result of the D2D sum rate with that obtained through Monte Carlo simulation. Table \ref{table2} shows the sum rate performance of the D2D links under different D2D line densities when $\alpha=4$, and $d_{k,k}=50$ m, $R=500$ m, $P_{\rm max, c}=100$ mW, and $P_{\rm max, d}=0.1$ mW. It can be seen that the analytcial results well match the simulation results.

 \begin{table*}
\caption{Simulation parameters }
\centerline{
    \begin{tabular}{c|c}
	\hline
	\textbf{Paramers} & Values  \\
	\hline
	Cell radius ($R$)       & 500 (m) \\
	The D2D link range ($d_{k,k}$)       & 50 (m)  \\
         D2D link density ($\lambda$) & 0.00002 and 0.00005 \\
	Average number D2D links ($K$) & ${\b E}[K] = \pi R^2 \lambda \in\{15,39\}$ \\
	Path-loss exponent ($\alpha$)  & 4 \\
	Target SINR threshold ($\beta$)  & from -6 to 21  (dB)\\
	The maximum transmit power of the cellular user & $P_{\textrm{max, c}}=100$ mW  \\
	The maximum transmit power of the D2D transmitters & $P_{\textrm{max, d}}=0.1$ mW  \\
	Noise variance ($\sigma^2$) for 1MHz bandwidth & -143.97 (dBm) \\
	The number of realizations  & 1000 geometry drops\\	 \hline
    \end{tabular}}
\end{table*}

 \section{Simulation Results}

\begin{figure}
\centering
\includegraphics[width=9cm]{./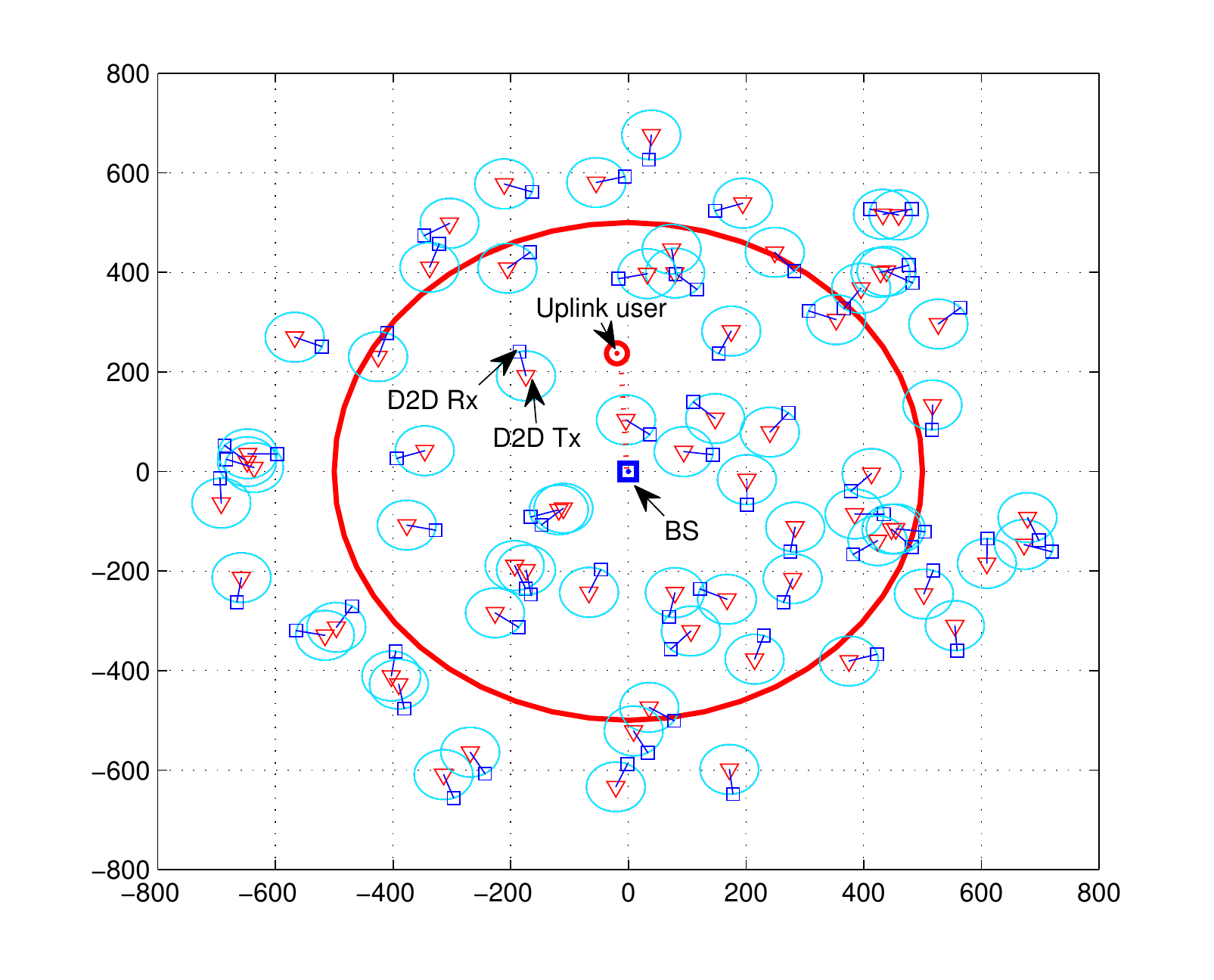}\vspace{-0.6cm}
\caption{A snap shot of link geometry for a D2D underlaid cellular network when the dense D2D link deployment scenario, i.e., $\lambda$=0.00005. In our simulation, we consider out-of-cell interference created by the D2D transmission for the warp-around effect.}\vspace{-0.4cm}
\label{fig:simulation}
\end{figure}

In this section, we provide numerical results for the D2D underlaid cellular system. From our simulation results, we first show the performance gain of the proposed power control methods compared to the no power control case in terms of the cellular user and the D2D user coverage probability.

\subsubsection{Simulation Setup}
Fig. \ref{fig:simulation}  shows one snap shot of the cell geometry. As illustrated, the BS is located at the center position $(0,0)$ in $\mathbb{ R}^2$ plane and the cellular user is uniformly dropped within the range of $R=500$ m. The D2D transmitters are dropped according to PPP with the density parameter $\lambda \in\{0.00002,0.00005\}$ in a ball centered at the origin and the radius of $R+250$ m so that the average number of D2D links are ${\b E}[K] = \pi R^2 \lambda \in\{15, 39\}$ while removing cell edge effect on the D2D link performance. Further, for a given D2D transmitter's location, the corresponding D2D receiver is isotropically dropped at a fixed distance $d_{k,k}=50$m away from the D2D transmitter. Since the D2D communication are supposed to be of short range compared to the cellular link, we assume that the average transmit power of the cellular user and D2D links are equal to $P_{\textrm{max, c}}=100$ mW and $P_{\textrm{max, d}}=0.1$ mW. Since the number of D2D links $K$ is a random variable and we evaluate the coverage probability and sum rate performance of the proposed algorithms by averaging 1000 independent realizations. Further, the optimal transmission scheduling parameter $G_{\rm min}$ is obtained as in (\ref{eq:Gmin}). The parameters used in the simulations are summarized in Table III.

\subsubsection{Coverage Probability Comparison in Sparse D2D Link Deployment}
\begin{figure}
\centering
\includegraphics[width=9cm]{./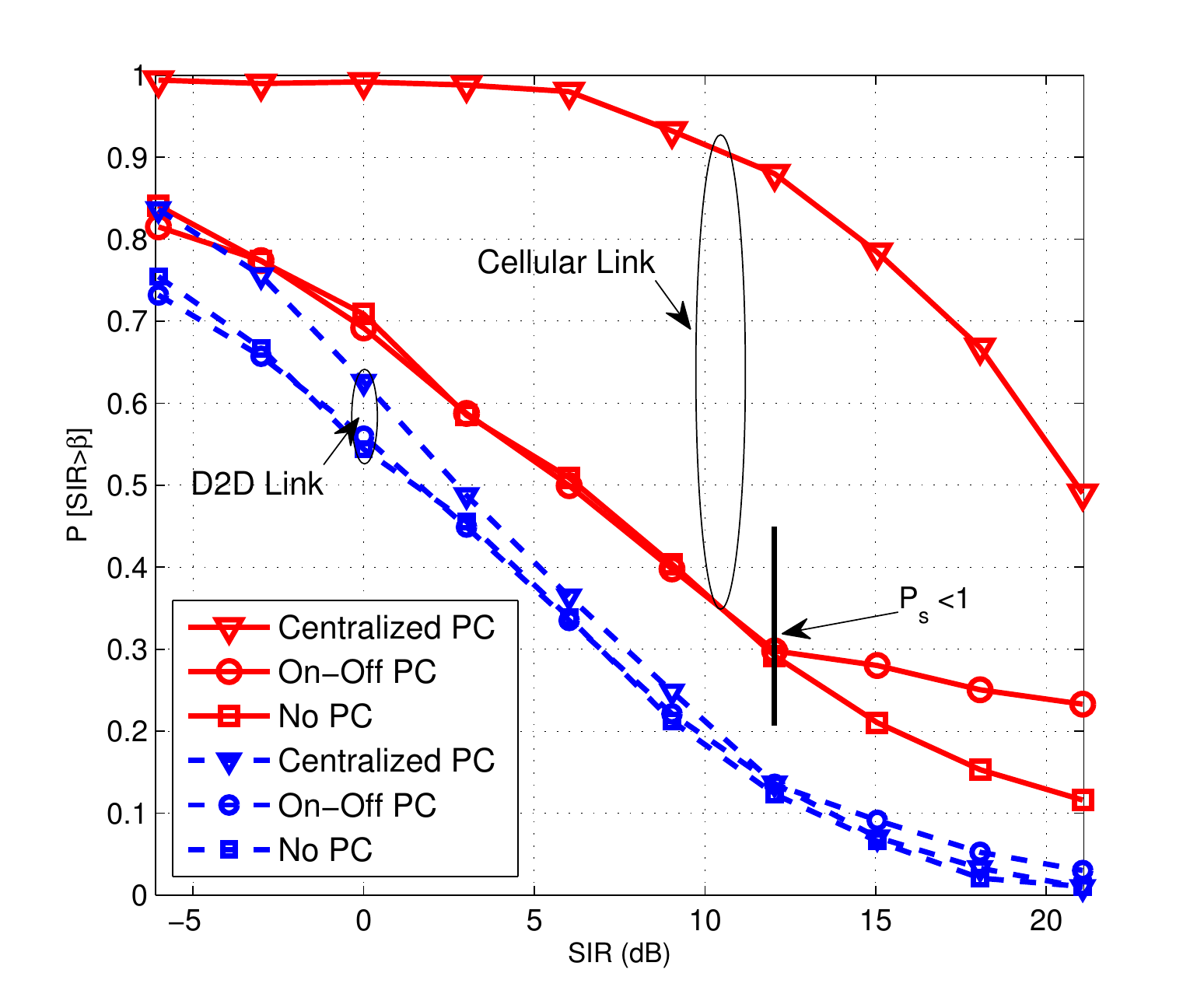}
\caption{Coverage probability performance of both the cellular and D2D  links according to different power control methods when the D2D links are sparse, i.e., $\lambda=0.00002$.}
\label{fig:sparse_cov}
\end{figure}

Suppose the sparse D2D link deployment scenario where the average number of D2D links in the cell equals ${\b E}[K] = \pi R^2 \lambda = 15.7$.
In this scenario, we compare the coverage probability of the cellular link and the D2D links under different D2D power control algorithms. As shown in Fig. \ref{fig:sparse_cov}, we observe that the proposed power control methods improve the cellular user coverage probability. The proposed power control methods also provide increased D2D link coverage probability compared to the no power control case, especially in the high target SINR regime. This implies that the power control methods are efficient to mitigate both intra-D2D and cross-tier interference when D2D links communicate with a high data rate. In particular, one remarkable observation is that the centralized power control achieves nearly perfect cellular user coverage probability performance, i.e., (no outage) in the low target SINR values, while successfully supporting a large number of active D2D links (48 \%)  when target SNIR $\beta=3$ dB. Meanwhile, the on-off distributed power control method yields performance gains for both cellular and D2D links compared to that of no power control case when the target SINR is larger than 12 dB. This is because the proposed on-off power control method provides the same performance with the no power control case until $\beta <12 $ dB, i.e., $P^{\star}_s=\min\left\{ \frac{\textrm{sinc} (\frac{2}{\alpha})}{\pi
\lambda \beta^{ \frac{2}{\alpha} }d_{k,k}^2 },1\right\}=1$ while it is activated when $\beta>12$ dB. For example, when the target SINR is 15 dB, the on-off power control method provides $13 \%$ cellular link and $5 \%$ D2D link coverage probability performance gains compared to the no power control case.

\begin{figure}
\centering
\includegraphics[width=9cm]{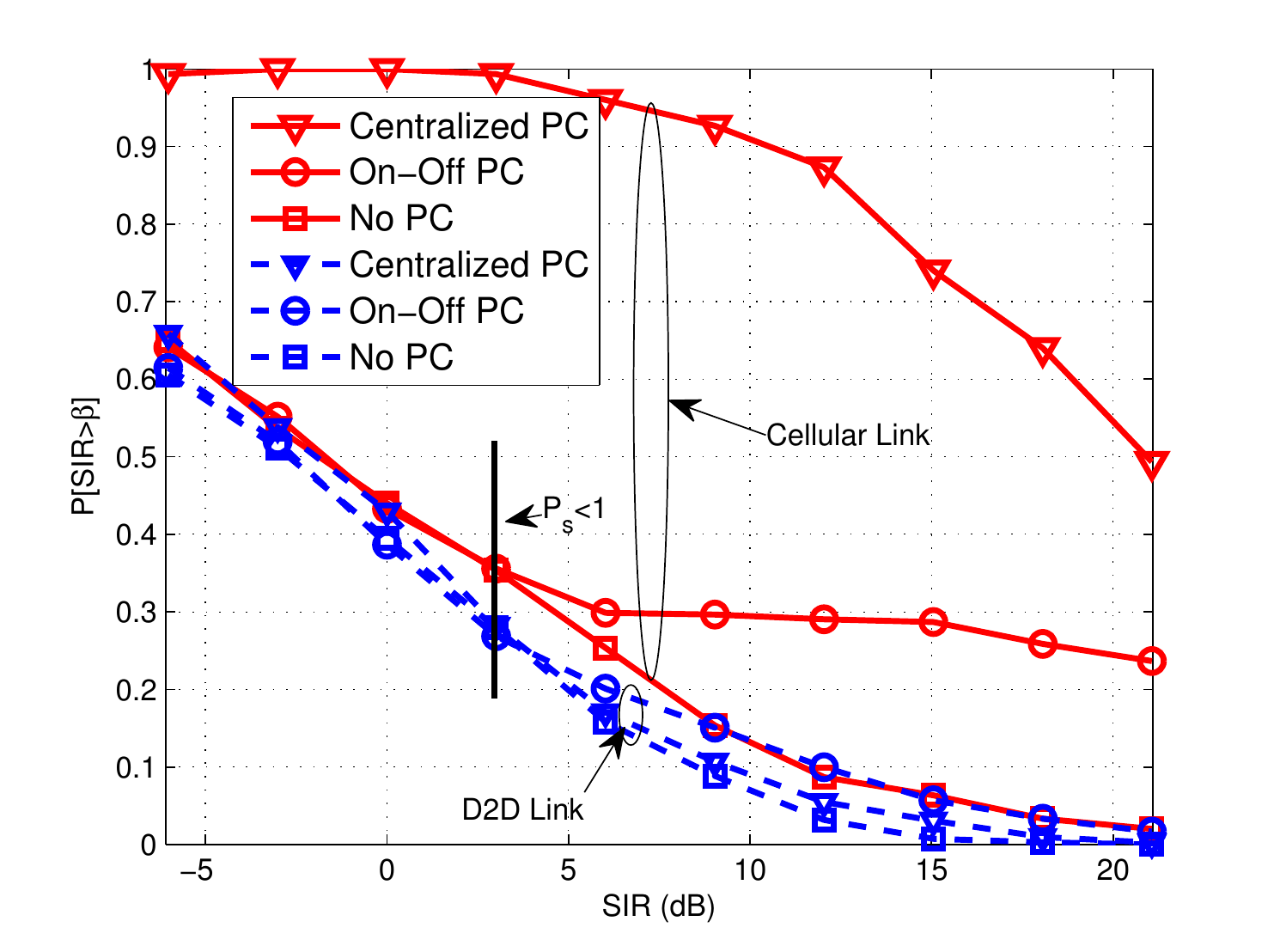}
\caption{Coverage probability performance of the cellular link according to different power control methods when the D2D links are dense, i.e., $\lambda=0.00005$.}
\label{fig:dense_cellular}
\end{figure}

\subsubsection{Coverage Probability Comparison in Dense D2D Link Deployment}
Consider a dense D2D link deployment scenario where the average number of D2D links in the cell equals ${\b E}[K] = \pi R^2 \lambda = 39$. For the dense D2D link deployment, as shown in Fig. \ref{fig:dense_cellular}, we observe  similar trends as in the sparse D2D link deployment case. One interesting point is that the performance degradation of the cellular user is not significant as the number of D2D links increases when the centralized power allocation method is applied because the proposed admission control ensures that the uplink user is protected. This implies that the centralized power control method is able to support reliable uplink performance regardless of the density of D2D links. Meanwhile, the D2D user coverage probability performance becomes deteriorated because of the increased intra-D2D link interference. It is notable that the proposed on-off power control method improves the performance of the cellular and D2D link coverage probabilities compared to that of no power control case when the target SINR is greater than $3$ dB. Although the D2D user coverage probability performance decreases in the dense scenario, the total number of successful D2D transmissions is large than that of the sparse D2D link deployment scenario. For example, when the target SINR is 3 dB, the total numbers of successful D2D transmissions in both sparse and dense scenarios are about $|\mathcal{S}|_{\textrm{sparse}} =\b E[ K {P}^{(D)}_{\textrm{cov}}(3) ] =15\times 0.5 \simeq 7.5 $ and $|\mathcal{S}|_{\textrm{dense}} =\b E[ K {P}^{(D)}_{\textrm{cov}} ( 3) ] = 39\times 0.27 \simeq 10$, respectively.

\section{Conclusions}
In this paper, we proposed a new network model  for a D2D underlaid cellular system based on stochastic geometry. In this system, we proposed both centralized and decentralized power control algorithms. One consequence of the results we observed is that the centralized power control approach leads to improve the cellular network throughput performance due to the additional underlaid D2D links while supporting reliable communication for the uplink cellular user. Meanwhile, the distributed power control approach is not enough to guarantee reliable cellular links; however, it also improves the cellular network throughput by allowing D2D links to be underlaid in the network. Future work could investigate the effect of multiple antennas at the base station, other cell interference, and joint optimization across the resource allocation and power control.

\appendix

\subsection{Proof of Theorem \ref{Theorem1}}
\label{Lin:pro:1:proof}

First notice that
\begin{align}
\bar{P}^{(C)}_{\textrm{cov}} &= \b P ( \sinr_0 \geq \beta_0  ) \nonumber \\
&= \b P \left(   \frac{ p_0 |h_{0,0}|^2  d_{0,0}^{-\alpha} }{ \sum_{k \in \Phi  } p_{k} |h_{0,k}|^2  ||x_k||^{-\alpha} + \sigma^2} \geq  \beta_0 \right)  \nonumber \\
&= \b P \left(\!  |h_{0,0}|^2\!  \geq \beta_0 p_0^{-1} d_{0,0}^{\alpha} \!\! \left( \!\sum_{k \in \Phi  } p_{k} |h_{0,k}|^2  ||x_k||^{-\alpha} \!+\! \sigma^2\! \right)\!\!  \right) \nonumber \\
&= \b E \left[ e^{ - \beta_0 p_0^{-1} d_{0,0}^{\alpha}  \left( \sum_{k \in \Phi  } p_{k} |h_{0,k}|^2  ||x_k||^{-\alpha} + \sigma^2 \right) } \right]  \nonumber \\
&\!=\!\! \b E \left[ e^{ - \sigma^2\beta_0 p_0^{-1} p_{0,0}^{\alpha} }\! \right]\!\! \b E \!\left[  e^{-\beta_0 p_0^{-1} d_{0,0}^{\alpha}\left( \sum_{k \in \Phi  } p_{k} |h_{0,k}|^2  ||x_k||^{-\alpha}   \right) } \! \right],
\label{eq:1}
\end{align}
where in the second last equality we use the fact that $|h_{0,0}|^2 \sim \textrm{Exp}(1)$ and thus $\b P(|h_{0,0}|^2\geq x) = e^{-x}$. Conditioned on the transmit power of the typical uplink transmitter $p_0 =p$ and the  distance $d_{0,0} = d$ from the cellular transmitter to  BS, we next compute the second term (\ref{eq:1}). To this end, we need the Laplace transform $\c L_{\Phi} (s) = \b E\left[ e^{-s \left( \sum_{k \in \Phi  } p_{k} |h_{0,k}|^2  ||x_k||^{-\alpha}  \right) } \right]$ given as
\begin{align}
\c L_{\Phi} (s) = e^{ - \frac{\frac{2}{\alpha} \pi^2 }{\sin (\frac{2}{\alpha} \pi ) }  \b E \left[ p_k^{ \frac{2}{\alpha} }\right] \lambda s^{ \frac{2}{\alpha} }  }.
\end{align}
Using $\c L_{\Phi} (s)$ yields
\begin{align}
\bar{P}^{(C)}_{\textrm{cov}| p_0 =p, d_{0,0}= d } =  e^{ - \sigma^2\beta_0 p^{-1} d^{\alpha} } e^{ - \frac{\frac{2}{\alpha} \pi^2 }{\sin (\frac{2}{\alpha} \pi ) } \lambda \beta_0^{ \frac{2}{\alpha} }  \b E \left[ p_k^{ \frac{2}{\alpha} }\right]  d^2 p^{- \frac{2}{\alpha} } }.
\end{align}
De-conditioning with respect to $p_0$ and $d_{0,0}$ yields the uplink coverage probability $\bar{P}^{(C)}_{\textrm{cov} }$. The last step is to derive the probability distribution of $X=p_0^{-1}d_{0,0}^{\alpha}$:
\begin{align}
F_X (x) &= \b P( p_0^{-1}d_{0,0}^{\alpha}\leq x ) \nonumber \\
&= \int \b P( d_{0,0} \leq (x p)^{\frac{1}{\alpha}} ) \d F_{p_0} (p)  \nonumber \\
&= \int  F_{ d_{0,0} } ( x^{\frac{1}{\alpha}} p^{\frac{1}{\alpha}} ) \d F_{p_0} (p).
\end{align}

\subsection{Proof of Corollary \ref{Lin:pro:2}}
\label{Lin:pro:2:proof}

Let $\phi (x) = e^{ -a_1 x - a_2 x^{\frac{2}{\alpha}} }$. We compute the first and second derivative of $\phi (x)$ as follows:
\begin{align}
\phi^{'} (x) &= - e^{ -a_1 x - a_2 x^{\frac{2}{\alpha}} } \left( a_1 + a_2 \frac{2}{\alpha} x^{\frac{2}{\alpha} - 1} \right),  \\
\phi^{''} (x) &=  e^{ -a_1 x - a_2 x^{\frac{2}{\alpha}} } \left( a_1 + a_2 \frac{2}{\alpha} x^{\frac{2}{\alpha} - 1} \right)^2 \nonumber \\ 
&+ e^{ -a_1 x - a_2 x^{\frac{2}{\alpha}} } a_2 \frac{2}{\alpha} \left(1 - \frac{2}{\alpha} \right) x^{\frac{2}{\alpha} - 2} .
\end{align}
As $\alpha >2$, $\phi^{''} (x) \geq 0$ for $x \geq 0$ and thus $\phi (x)$ is convex for $x \geq 0$. Applying Jensen's inequality, we obtain
\begin{align}
\bar{P}^{(C)}_{\textrm{cov}} = \b E_{X} \left[ e^{  -a_1 X - a_2 X^{\frac{2}{\alpha}}  }  \right] \geq e^{  -a_1 \b E[X] - a_2 \b E[X]^{\frac{2}{\alpha}}  },
\end{align}
where
$
\b E [ X ] = \b E\left[ p_0^{-1} d_{0,0}^{\alpha} \right] = \b E[ p_0^{-1}] \b E [ d_{0,0}^{\alpha} ]
$
due to the independence of $P_C$ and $D_C$. Finally, $\b E [ d_{0,0}^{\alpha} ]$ can be computed explicitly.
\begin{align}
\b E [ d_{0,0}^{\alpha} ] &= \int r^{\alpha} \d F_{d_{0,0}} (r) \nonumber \\
&= \int_{0}^{R} r^{\alpha} \frac{2r}{R^2} \d r 
\nonumber \\
&=  \frac{2}{2+\alpha} R^{\alpha}.
\end{align}

\subsection{Proof of Corollary \ref{theorem2}}
\label{theorem2:proof}

Note that $\phi (p)$ is positive-valued and continuous when $p>0$. Also, $\lim_{p\to 0^+} \frac{\phi (p)}{p} \to 0$ and $\lim_{p\to \infty}  \frac{\phi (p)}{p}\to 0$. These facts imply that there exists a non-degenerate $p^\star (d) \in (0,\infty)$ that achieves the maximum value of $\frac{\phi (p)}{p}$. If we ignore the constraint $\int   \d F_{p_{0,0}} (p) = 1$  and $p\leq P_{\rm max, c}$ for the time being and consider the following relaxed conditional cellular link coverage optimization problem:
\begin{eqnarray}
\max &&  \int \phi (p)   \d F_{p_{0,0}} (p)  \nonumber \\
 \textrm{subject to}&& \int p  \d F_{P_C} (p) = P_{\textrm{avg, c}}.
\end{eqnarray}
Let $\d L(p) = \frac{p}{P_{\textrm{avg, c}}}  \d F_{p_{0,0}} (p)$. Then the above optimization problem is equivalently formulated as
\begin{eqnarray}
\max &&  P_{\textrm{avg, c}}\cdot \int \frac{ \phi(p) }{p}  \d L(p)  \nonumber \\
 \textrm{subject to}&& \int  \d L(p) = 1. \label{eq:cov_lowbound}
\end{eqnarray}
whose optimal solution is $ L^\star (p^\star (d)) - L^\star (p^{\star-} (d)) = 1 $ and $L^\star (p) = 0 $ for $p \neq p^\star (d)$. Therefore, we conclude that the binary power control strategy is optimal.

\subsection{Proof of Theorem \ref{Theorem2}}
\label{Lin:pro:10:proof}

To prove Theorem 3, we need to derive the ccdf of $\sinr_k$. To this end, using Slivnyak's theorem \cite{stoyan1995stochastic}, it is easy to see that
\begin{align}
\c L_{\Phi \setminus \{k\}} (s) &= \b E\left[ e^{-s { \sum_{x \in \Phi \setminus \{k\}  } p_{i} |h_{k,i}|^2  ||x_i||^{-\alpha}   } } | k \in \Phi \right]
= \c L_{\Phi} (s) \nonumber\\ &
= e^{ - \frac{ \pi \lambda }{\textrm{sinc} (\frac{2}{\alpha}  ) }  \b E \left[ p_k^{ \frac{2}{\alpha} }\right]  s^{ \frac{2}{\alpha} }  }.
\end{align}
It follows that
\begin{align}
&\b P ( \sinr_k \geq \beta  ) \\\nonumber 
&= \b P \left(    \frac{ p_{k} |h_{k,k}|^2  d_{k,k}^{-\alpha} }{ \sum_{x \in \Phi \setminus \{k\}  } p_{i} |h_{k,i}|^2  ||x_i||^{-\alpha} \!+\! p_0 |h_{k,0}|^2 d_{k,0}^{-\alpha} + \sigma^2} \geq  \beta \right)   \\
\!\!&\!=\!\! \b P \!\!\left(\!\!  |h_{k,k}|^2  \!\geq \!\beta p_k^{-1} d_{k,k}^{\alpha} \left( \!\!\sum_{x \in \Phi \setminus \{k\}  }\!\!\!\!\! p_{i} |h_{k,i}|^2  ||x_i||^{-\alpha} \!\!+\! p_0 |h_{k,0}|^2 d_{k,0}^{-\alpha} \!+\! \sigma^2  \!\right)\!\!\!  \right)  \\
&= \b E \left[ e^{ - \beta p_k^{-1} d_{k,k}^{\alpha}  \left( \sum_{x \in \Phi \setminus \{k\}  } p_{i} |h_{k,i}|^2  ||x_i||^{-\alpha} + p_0 |h_{k,0}|^2 d_{k,0}^{-\alpha} + \sigma^2  \right)  } \right]  \\
&=  \b E_{p_k^{-1} d_{k,k}^{\alpha}} \left[  e^{ - \sigma^2 \beta p_k^{-1} d_{k,k}^{\alpha}  } \c L_{\Phi} ( \beta p_k^{-1} d_{k,k}^{\alpha} ) \c L_{Y}( \beta p_k^{-1} d_{k,k}^{\alpha} ) \right],
\end{align}
which completes the proof.


\bibliographystyle{IEEEtran}
\bibliography{IEEEabrv,D2D_ref_namyoo}


\end{document}